\title{Game Transformations That Preserve Nash Equilibria or Best-Response Sets\footnote{Published in the Proceedings of the Thirty-Third International Joint Conference on Artificial Intelligence (IJCAI 2024). An earlier extended abstract of this paper can be found in the Proceedings of the 23rd International Conference on Autonomous Agents and Multiagent Systems (AAMAS 2024).}}
\author{
Emanuel Tewolde%$^1$
\And
Vincent Conitzer%$^1$
\affiliations
%$^1$
Foundations of Cooperative AI Lab (FOCAL)\\
Computer Science Department, Carnegie Mellon University, USA\\
\emails
emanueltewolde@cmu.edu,
conitzer@cs.cmu.edu}
\begin{document}

\maketitle

\begin{abstract}
In this paper, we investigate under which conditions normal-form games are (guaranteed to be) strategically equivalent. First, we show for $N$-player games ($N \geq 3$) that \begin{enumerate}[leftmargin=0.6cm]
\item[(A)] it is NP-hard to decide whether a given strategy is a best response to some strategy profile of the opponents, and that 
\item[(B)] it is co-NP-hard to decide whether two games have the same best-response sets.
\end{enumerate}
Combining that with known results from the literature, we move our attention to equivalence-preserving game transformations.

It is a widely used fact that a positive affine (linear) transformation of the utility payoffs neither changes the best-response sets nor the \NE{} set. We investigate which other game transformations also possess either of the following two properties when being applied to an arbitrary $N$-player game ($N \geq 2$): 
\begin{enumerate}[label=(\roman*), leftmargin=0.6cm]
\item The \NE{} set stays the same;
\item The best-response sets stay the same. 
\end{enumerate}

For game transformations that operate player-wise and strategy-wise, we prove that (i) implies (ii) and that transformations with property (ii) must be positive affine. The resulting equivalence chain highlights the special status of positive affine transformations among all the transformation procedures that preserve key game-theoretic characteristics.

\end{abstract}

\section{Introduction}
\label{sec:introduction}

\subsection{Motivation} 
\label{sec:motivation}
When faced with a strategic interaction with other agents, it can be computationally useful for AI systems -- as we will discuss further down -- to detect when the current situation can be treated in the same way as another strategic game that has already been dealt with in the past. This problem can also be critical for the robustness and generalizability of our AI systems. AI agents will often need to act in new environments, possibly with multiple equilibria, and yet be predictable to each other and to us to avoid bad outcomes. Recognizing a new environment as strategically equivalent to a previously encountered one can help tremendously with this, providing a precedent for action that ensures that everyone’s expectations on behavior are well calibrated. \citet{OesterheldC22}, for example, build on that in order to achieve Pareto-improved outcomes in games played by AI representatives (where games can be transformed by reprogramming the AIs' rewards).

Not least for these various reasons, even the simplest class of games -- $2$-player normal-form with $2$ actions per player -- has been studied extensively in order to obtain a complete taxonomy for them; see for example \cite{RobinsonG05,RapoportGG76,Borm87}. With it, it is easy to recognize when a $2 \times 2$ game contains traits of competition, cooperation, coordination, etc.~\cite{BrunsK21}. Such interpretation tools are also being developed for more complex strategic situations \cite{marris23}, but this still remains an important venue of further research. One challenge is that larger games become prohibitively complex to compare directly: 
\citet{10.1016/j.tcs.2008.07.021} show that deciding whether two $2$-player normal-form games share \NEs{} is a computationally hard task. We will show in Section~\ref{sec:decaboutbrs} that this task is also computationally hard for the case of best response sets and at least three players.

One classic tool that emerged in the beginnings of game theory has been to transform a given game into other strategically equivalent games that are easier to analyze \cite{10.2307/j.ctt1r2gkx}. 
Positive affine (linear) transformations (PATs) have been particularly useful in that regard \cite{Aumann1961AlmostSC,10.1007/978-3-642-10841-9_44}. To illustrate PATs, consider any $2$-player normal-form game in which the players' utilities are measured in dollars. Then, the best-response strategies of player~1 do not change if her utility payoffs are multiplied by a factor of~$5$. Moreover, they also do not change if $10$ dollars are added to all outcomes that involve player 2 playing his, say, third strategy. More generally, PATs have the power to rescale the utility payoffs of each player and to add constant terms to the utility payoffs of a player $i$ for each strategy choice $\ks_{-i}$ of her opponents. 

Through leveraging PATs, previous work significantly extended the applicability of efficient \NE{} solvers \cite{Neumann1928,Dantzig51,Adler2013,main} to classes beyond those of zero-sum and rank-$1$ games\footnote{A $2$-player game, represented by its payoff matrices $A,B \in \R^{m \times n}$, is said to have rank $1$ if $\rank(A+B) = 1$.} \cite{moulin,KONTOGIANNIS201264,HeymanG23}.

PATs are also popular in mechanism design and e-commerce: \emph{Affine maximizer auctions} are PAT transformations of the classic VCG mechanism, and as such, inherit strategy-proofness and individual rationality by the strategy-preserving nature of PATs. They play a key role in finding revenue-maximizing mechanisms (both with classical optimizers \cite{LikhodedovS04} and deep learning \cite{CurrySD23,CurryTCMKSHS24}, and in improving welfare in redistribution mechanisms \cite{GuoC10} and advertisement auctions \cite{DengMMZ21}.

The versatility of PATs is based on their well-known property that they do not change preferences, best responses, or Nash equilibria, when being applied to an arbitrary game. In a very precise sense, PATs are also \emph{the only} game transformations that do not change preferences; cf.\ Section~\ref{sec:literature review}. The main result in this paper addresses the question of whether there are other (efficiently computable) game transformations that do not change best responses or Nash equilibria.

\subsection{Overview}
\label{sec:overview}

Sections~\ref{sec:nfgames} and~\ref{sec:gametrafos} provide some background on game-theoretic concepts that are relevant to understanding and deriving our main results. In Section~\ref{sec:decaboutbrs}, we develop computational hardness results for deciding whether a strategy in a game ever constitutes a best response and for deciding whether two games have the same best-response sets. We believe these results are of independent interest. However, they are also important for Section~\ref{sec:discussion}, in which we discuss why we will henceforth restrict our attention to game transformations that transform utilities player-wise and strategy-wise (called \textit{separability}). In Section~\ref{sec:strat equiv preserving}, we proceed to characterize all separable game transformations that preserve the \NE{} set -- or, alternatively, the best response sets -- when being applied to an arbitrary $N$-player game. Last but not least, Section~\ref{sec:literature review} puts our results into context with further related work. 

To illustrate the insights of Section~\ref{sec:strat equiv preserving} on an example, consider~$H_{\textnormal{Ex}}$ that takes any $2$-player $2 \times 2$ normal-form game with payoff matrices
\[A = \begin{pmatrix}
a_{11}  & a_{12}\\
a_{21} & a_{22}
\end{pmatrix} \quad , \quad
B = \begin{pmatrix}
b_{11} & b_{12}\\
b_{21} & b_{22}
\end{pmatrix} \]
and transforms it into the game $H_{\textnormal{Ex}}(A,B) := (A', B')$ that is defined as
\[A' = \begin{pmatrix}
-2 a_{11} +10  & a_{12}^5\\
e^{a_{21}} & 0
\end{pmatrix} 
\, , \, B'= \begin{pmatrix}
|b_{11}| & \sign (b_{12})\\
\sqrt{|b_{21}|} & \arctan (b_{22})
\end{pmatrix} \]
As one can see with the sign function in $B'$, it is noteworthy to highlight that our notion of a game transformation allows for non-continuous functions. With Theorem~\ref{equiv charact of PAT}, we will show that there must exist $2 \times 2$ games $(\bar{A}, \bar{B})$ for which $H_{\textnormal{Ex}}$ does not preserve their Nash equilibrium set or - respectively -  their best-response sets. More generally, we derive that \textit{universally} preserving the Nash equilibrium set implies that the best-response sets always have to be preserved as well; and that the latter property is only satisfied by game transformations~$H$ with the very restricted structure of a PAT. In the example of $H_{\textnormal{Ex}}$, each transformation map within it single-handedly already violates a PAT structure.

Full proofs for statements in this paper can be found in the appendix.

\section{Normal-Form Games}
\label{sec:nfgames}

Notation-wise, we denote $[n] := \{1,\ldots,n\}$ for any $n \in \N$.
\\
A normal-form multiplayer game $G$ specifies 
\begin{enumerate}
    \item[(a)] the number of players $N \in \N, N \geq 2$,
    \item[(b)] a set of pure strategies $S^i = [m_i]$ for each player~$i$ where $m_i \in \N$, $m_i \geq 2$, and
    \item[(c)] the utility payoffs for each player~$i$ given as a function $u_i: S^1 \times \ldots \times S^N \rightarrow \R$.
\end{enumerate}
Denote the set of strategy profiles in $G$ as $S := S^1 \times \ldots \times S^N$. Throughout this paper, all considered multiplayer games shall have the same number of players $N$ and the same set of strategy profiles~$S$. Hence, any game $G$ will be determined by its utility functions $\{u_i\}_{i \in [N]}$. The players choose their strategies simultaneously and they cannot communicate with each other. A utility function $u_i$ can be summarized by its pure strategy outcomes for player~$i$, captured as an $N$-dimensional tensor or array $\big\{ u_i(\ks) \big\}_{\ks \in S}$.

As usual, we allow the players to randomize over their pure strategies, called mixed strategies. Then, player~$i$'s strategy space extends to the set of probability distributions 
    $\Delta(S^i) := \, \big\{ s^i = (s_k^i)_k \in \R_{\geq 0}^{m_i} : \,$
    $\sum_{k \in [m_i]} s_k^i = 1 \big\}$
over $S^i$. A tuple
    $\strats = (s^1, \ldots, s^N) \in \Delta(S^1) \times \ldots \times \Delta(S^N) =: \Delta(S)$
is called a mixed strategy profile\footnote{Not to be confused with a correlated strategy: In our notation, $\Delta(S)$ itself is not a simplex of high dimension but only the product of $N$ lower-dimensional simplices.} in $G$. The utility payoff of player~$i$ under profile $\strats$ is defined as the player's utility payoff in expectation 
$u_i(\strats) := \sum_{\ks \in S} s_{k_1}^1 \cdot \ldots \cdot s_{k_N}^N \cdot u_i(\ks)$.
The goal of each player is to maximize her utility.

We will abbreviate with $S^{-i}$ the set that consists of all possible pure strategy choices $\ks_{-i} = (k_1, \ldots, k_{i-1},k_{i+1}, \ldots, k_N)$ of the opponent players (resp. $\Delta(S^{-i})$ for the set of mixed strategy choices $\strats^{-i} = (s^1, \ldots, s^{i-1},s^{i+1}, \ldots, s^N)$). We will also use $u_i(k_i,\ks_{-i})$ instead of $u_i(\ks)$ to stress how player~$i$ can only influence her own strategy when it comes to her payoff (resp. $u_i(s^i,\strats^{-i})$ instead of $u_i(\strats)$).
\begin{defn}
The best-response set of player~$i$ to the opponents' strategy choices $\strats^{-i}$ is defined as 
\\
    $\BR_{u_i}(\strats^{-i}) :=  \argmax_{t^i \in \Delta(S^i)} \big\{ \, u_i(t^i,\strats^{-i}) \, \big\}$.
\end{defn} 

Best-response strategies capture the idea of optimal play against the other player's strategy choices. The most popular equilibrium concept in non-cooperative games is based on best responses.
\begin{defn}
A strategy profile $\strats \in \Delta(S)$ to a game $G = \{u_i\}_{i \in [N]}$ is called a \NE{} if for every player~$i \in [N]$ we have $s^i \in \BR_{u_i}(\strats^{-i})$.
\end{defn}
\noindent
By a result of \citet{Nash48}, any such multiplayer game $G$ admits at least one \NE{}.

\section{Decision Problems about Best Responses}
\label{sec:decaboutbrs}

In this section we show that two decision problems about best responses are hard for $N$-player games, when $N \geq 3$. To our knowledge, these results are novel.

For computational problems involving $N$-player games $G$ with strategy sets $(S^i)_{i \in [N]}$ and utility functions $(u_i)_{i \in [N]}$, we are interested in their computational complexities in terms of $|S|$ and the binary encoding of all utility payoffs $( u_i(\strats) )_{\strats \in S, i \in [N]}$. For that, we require that utility payoffs take on rational values only. 

First, we consider the problem of deciding whether a mixed strategy of a player is ever a best response to some mixed strategy profile of the opponent players. In its computationally easiest form, we may formulate it as the following.
\begin{defn}[{\sc CheckIfEverBR}]
    Given a $3$-player normal-form game, do there exist mixed strategies $\rs \in \Delta(S^2)$ of \textnormal{PL2} and $\strats \in \Delta(S^3)$ of \textnormal{PL3} such that pure strategy $1$ of \textnormal{PL1} is a best response to $(\rs, \strats)$?
\end{defn}
This is different from determining the best responses of a player to a given strategy profile of the opponents, a task that can be solved in polynomial time. Our problem is related to \textit{rationalizable} strategies \cite{Bernheim84,Pearce84} - a concept that is based on the idea that a rational player can and should eliminate any strategy that is not a best response to some belief over what her opponents may play.
\begin{prop}
\label{checkifeverbr}
    {\sc CheckIfEverBR} is $\mathbf{NP}$-hard.
\end{prop}
The analogous formulation of {\sc CheckIfEverBR} for the case of $2$-player games can be efficiently decided by solving a system of linear (in-)equalities. We can recover polynomial-time solvability for many-player games if we allow the opponents to play in a coordinated fashion (cf.~correlated strategies). On a related note, \citet{Pearce84}[Lemma 3] shows that a strategy $s^*$ is a best response to some correlated strategy of the opponents if and only if $s^*$ is not strictly dominated by a mixed strategy.

We prove Proposition~\ref{checkifeverbr} by a reduction from the Balanced Complete Bipartite Subgraph problem. This decision problem asks whether a given weighted bipartite graph $G = (V \cup W,E)$ has subsets $V^* \subseteq V$ and $W^* \subseteq W$ of given size $K \in \N$ that are fully connected, that is, $(v,w) \in E$ for all $v \in V^*, w \in W^*$. This problem is known to be $\mathbf{NP}$-complete \cite{Garey90}[GT24].
\begin{proof}[Proof sketch of Proposition~\ref{checkifeverbr}]
Given an instance $G = (V \cup W,E)$ and $K$ of the Balanced Complete Bipartite Subgraph problem, construct a three player game where \textnormal{PL2} has strategy set $V$ and \textnormal{PL3} has strategy set $W$. \textnormal{PL1} will have the following strategies: Strategy ``$1$'' which will be the subject of interest in {\sc CheckIfEverBR}, one strategy for each node in $G$, and one strategy for each edge $(v,w) \in V \times W$ that is not present in $G$. The utility payoffs of \textnormal{PL1} will be carefully constructed such that strategy $1$ is a best response to mixed strategies $(\rs, \strats)$ of \textnormal{PL2} and \textnormal{PL3} if and only if the support of $\rs$ and $\strats$ form subsets $V^*$ and $W^*$ that make a balanced complete bipartite subgraph of $G$. To that end, we make strategy $v$ (resp. $w$) of \textnormal{PL1} very attractive for \textnormal{PL1} in the case that \textnormal{PL2} (resp. \textnormal{PL3}) plays their corresponding strategy $v$ (resp. $w$) with too much probability. Moreover, we make a strategy $(v,w) \notin E$ of \textnormal{PL1} very attractive for \textnormal{PL1} in the case that \textnormal{PL2} and \textnormal{PL3} both play their corresponding strategies $v$ and $w$ with any significant probability at all. Intuitively, these two conditions accomplish that in any potential certificate~$(\rs, \strats)$, \textnormal{PL2} and \textnormal{PL3} will mix over at least $K$ strategies and, moreover, they will only put non-negligible weight on strategies $v$ and $w$ if $(v,w) \in E$.
\end{proof}

Based on the hardness of {\sc CheckIfEverBR}, we can prove $\mathbf{co\text{-}NP}$-hardness of deciding best-response equivalence.
\begin{defn}[{\sc CheckIfSameBRs}]
    Given two $3$-player normal-form games with strategy set $S^1 \times S^2 \times S^3$, do they have the same best-response sets?
\end{defn}
\begin{thm}
\label{checkifequalBRs}
    {\sc CheckIfSameBRs} is $\mathbf{co\text{-}NP}$-hard.
\end{thm}
\begin{proof}[Proof sketch]
Given a game instance $G$ of {\sc CheckIfEverBR}, construct another game $G'$ by changing the utility that \textnormal{PL1} receives from playing strategy $1$ to something worse than the lowest payoff present in $G$. If a best-response set changed from $G$ to $G'$, then it must also be the case that strategy $1$ for \textnormal{PL1} was added or removed from that best-response set. The former cannot happen because strategy $1$ is strictly dominated for \textnormal{PL1} in $G'$ which prevents it from ever being a best response. Thus, $G$ and $G'$ will have the same best-response sets if and only if strategy $1$ is never a best-response strategy in $G$.
\end{proof}
Together with prior work found in the literature, Theorem~\ref{checkifequalBRs} will guide us in the next sections when it comes to the types of game transformations that we may consider for preserving key game-theoretic characteristics. We believe, however, that Proposition~\ref{checkifeverbr} and Theorem~\ref{checkifequalBRs} are also of independent interest for algorithmic game theory and AI research.

\section{Preliminaries on Game Transformations}
\label{sec:gametrafos} 

\subsection{Positive Affine Transformations} 

The following lemma (or restricted versions of it) is a well-known result for $2$-player games.\footnote{ See \citet{HeymanG23}[Lemma 2.1], \citet{maschler_solan_zamir_2013}[Theorem 5.35], \cite{moulin}[Theorem 1], \citet{harsanyi1988general}[Chapter 3] or \citet{DynGT}[Proposition 3.1].} Here, the notation $\1_n \in \R^n$ stands for the vector with all ones as its entries.
\begin{lemma}
\label{PAT preserves lemma}
Let $(A,B)$ be an $m_1 \times m_2$ bimatrix game and take arbitrary scalars $\alpha_1, \alpha_2 >0$ and vectors $c^1 \in \R^{m_2}, c^2 \in \R^{m_1}$. Define 
\[
    A' = \alpha_1 A + \1_{m_1} (c^1)^T \quad \textnormal{and} \quad B' = \alpha_2 B + c^2 \1_{m_2}^T \, \, .
\]
Then $(A', B')$ has the same best-response sets as $(A,B)$. 
Thus, both games have the same \NE{} set.
\end{lemma}
The game transformations in Lemma \ref{PAT preserves lemma} are called ($2$-player) positive affine transformations (PATs). An explicit example of a $2$-player PAT is one that transforms a $2 \times 2$ game $(A,B)$ into
\[A' = \begin{pmatrix}
2 a_{11} + 10  & 2a_{12} - 5\\
2 a_{21} + 10 & 2a_{22} - 5
\end{pmatrix} \, , \]
\[ B'= \begin{pmatrix}
\frac{1}{2}b_{11} & \frac{1}{2}b_{12} \\
\frac{1}{2}b_{21} -\sqrt{3} & \frac{1}{2}b_{22} -\sqrt{3}
\end{pmatrix} \, . \]

The intuition behind Lemma~\ref{PAT preserves lemma} is as follows: \textnormal{PL1} wants to maximize her utility given the strategy of \textnormal{PL2}. A positive rescaling of $u_1$ will change the utility payoffs but not the utility-maximizing strategies. The same holds true if we add utility payoffs to $u_1$ that are only dependent on the strategy choice of her opponent \textnormal{PL2}, because that would make a constant shift in terms of the decision variables of \textnormal{PL1}. 
\\
Let us generalize PATs to multiplayer games.
\begin{defn}
\label{multiplayer PAT defn}
A positive affine transformation (PAT) specifies for each player~$i$ a scaling parameter $\alpha^i \in \R, \alpha^i >0,$ and translation constants $C^i := ( c_{\ks_{-i}}^i)_{\ks_{-i} \in S^{-i}}$ for each choice of pure strategies from the opponents. 
The PAT $H_{\textnormal{PAT}} = \big\{ \alpha^i, C^i \big\}_{i \in [N]}$ then takes any game $G = \{u_i\}_{i \in [N]}$ as an input and returns the transformed game $H_{\textnormal{PAT}}(G) = \{u_i'\}_{i \in [N]}$ with utility functions
\begin{align}
\label{PAT transformed utilities}
\begin{aligned}
u_i' : S \rightarrow \R
\, \, , \, \,
\ks \mapsto \alpha_i \cdot u_i(\ks) + c_{\ks_{-i}}^i \, .
\end{aligned}
\end{align}
\end{defn}
Multiplayer PATs also preserve the best-response sets and \NE{} set, which we prove in Appendix~\ref{sec:helpinglemmas} for completeness.
\begin{lemma}
\label{multiplayer PAT preserves}
Take a PAT $H_{\textnormal{PAT}} = \big\{ \alpha^i, C^i \big\}_{i \in [N]}$ and any game $G = \{u_i\}_{i \in [N]}$. Then, the transformed game $H_{\textnormal{PAT}}(G) = \{u_i'\}_{i \in [N]}$ has the same best-response sets as the original game $G$. Consequently, $H_{\textnormal{PAT}}(G)$ also has the same \NE{} set as $G$.
\end{lemma}
PATs have found much success as a tool for simplifying a given game precisely because of this property. We want to investigate which other game transformations also preserve the best-response sets or the \NE{} set. If we found more of these transformations, we could use them to, e.g., further increase the class of efficiently solvable games.

\subsection{Separable Game Transformations}

In this paper, we will focus on the following space of game transformations. We discuss in Section~\ref{sec:discussion} why this forms a maximally large search space within which we may still reasonably hope to find game transformation that are equivalence-preserving and efficiently computable.
\begin{defn}
\label{def game trafo}
A separable game transformation $H = \{H^i\}_{i \in [N]}$ specifies for each player~$i$ a collection of functions 
$H^i := \big\{ h_{\ks}^i : \R \rightarrow \R \big\}_{\ks \in S}$, 
indexed by the different pure strategy profiles $\ks$. \\
The transformation $H$ can then be applied to any $N$-player game $G = \{u_i\}_{i \in [N]}$ with strategy set $S$ to construct the transformed game $H(G) = \{H^i(u_i)\}_{i \in [N]}$ where 
    \\
    $H^i(u_i) : S \to \R, \quad \ks \mapsto h_{\ks}^i \big( u_i(\ks) \big)$.
\end{defn}
Observe that the utility payoff of player~$i$ in the transformed game $H(G)$ from the pure strategy outcome $\ks$ is only a function of the utility payoff from that same player in that same pure strategy outcome of the original game~$G$.

We extend the utility functions $H^i(u_i)$ to mixed strategy profiles $\strats \in \Delta(S)$ as usual through 
\\
    $H^i(u_i)(\strats) := \sum_{\ks \in S} s_{k_1}^1 \cdot \ldots \cdot s_{k_N}^N \cdot h_{\ks}^i \big( u_i(\ks) \big)$.
\\
To simplify future notation, we will often use $h_{k_i,\ks_{-i}}^i$ to refer to $h_{\ks}^i$.

\begin{rem}
A multiplayer positive affine transformation $H_{\textnormal{PAT}} = \big\{ \alpha^i, C^i \big\}_{i \in [N]}$ makes a separable game transformation $H = \{H^i\}_{i \in [N]}$ by setting 
\\
$h_{\ks}^i : \, \, \R \to \R$
\, , \,
$z \mapsto \alpha^i  \cdot z + c_{\ks_{-i}}^i$.
\end{rem}
In the following Definitions~\ref{defn NE preserving} and~\ref{defn BR preserving}, we define the universally preserving characteristics that we are interested in.
\begin{defn}
\label{defn NE preserving}
Let $H = \{H^i\}_{i \in [N]}$ be a separable game transformation. Then we say that $H$ universally preserves \NE{} sets if for all games $G = \{u_i\}_{i \in [N]}$ the transformed game $H(G) = \{H^i(u_i)\}_{i \in [N]}$ has the same \NE{} set as~$G$.
\end{defn}

\begin{defn}
\label{defn BR preserving}
Let map $H^i$ come from a separable game transformation $H$. Then we say that $H^i$ universally preserves best responses if for all utility functions $u_i : S \rightarrow \R$ and for all opponents' strategy choices $\strats^{-i} \in \Delta(S^{-i})$:
\begin{align*}
\BR_{H^i(u_i)}(\strats^{-i}) &= \argmax_{t^i \in \Delta(S^i)} \big\{ H^i(u_i)(t^i,\strats^{-i}) \big\} 
\\
&= \argmax_{t^i \in \Delta(S^i)} \big\{ u_i(t^i,\strats^{-i}) \big\} = \BR_{u_i}(\strats^{-i}) \, .
\end{align*}
\end{defn}

Lemma~\ref{multiplayer PAT preserves} states that the maps $H^i$ of a PAT universally preserve best responses. Note, moreover, that by definition of a \NE{}, a game transformation $H = \{H^i\}_{i \in [N]}$ will universally preserve \NE{} sets if for every player~$i$ the map $H^i$ universally preserves best responses. Therefore, being a PAT implies Definition~\ref{defn BR preserving} implies Definition~\ref{defn NE preserving}. In Section~\ref{sec:strat equiv preserving} we will show the reverse implication chain for game transformations that are separable.

\section{Discussion of Restrictions}
\label{sec:discussion}

The space of separable game transformations forms a vast landscape in which we may search for universally preserving transformations. This can be seen from the game transformation example $H_{\textnormal{Ex}}$ of Section~\ref{sec:overview}. However, one might still ask why this paper does not expand its attention to non-separable game transformations. We will discuss that in this section.

For example, consider a game transformation that introduces or removes duplicate strategies or dummy players. Note that this would require the transformations to have the power to change the strategy sets and player set. Nonetheless, these specific examples are well-behaved in the sense that they alter the \NE{} set (or best responses) in an easily describable manner. \citet{AbdouPSV22}, for example, managed to characterize how selected examples of these transformations interact with various methods of decomposing a game. Transformations that change the strategy sets and player set also appear in the literature under the term \textit{Nash homomorphism}, and they have been of use for complexity-theoretic studies, e.g., of win-lose games \cite{Abbott05onthe} or ranking games \cite{10.5555/1597538.1597636}. Suffice to say, once we allow for game transformations to arbitrarily change the game structure, i.e. the player set and strategy sets, it is not straightforward to define anymore under what conditions two games of different game structure should be considered ``strategically equivalent''. This makes such general game transformations prohibitively complex (or impossible) to analyze beyond a case by case basis. Therefore, and in accordance with most of the literature on strategic equivalence between games \cite{moulin,MORRIS2004260,10.1016/j.tcs.2008.07.021,Liu96}, we restrict our attention to games whose game structures are directly comparable.

Indeed, game transformations that preserve the player set and the strategy sets form an interesting search space because Definitions~\ref{defn NE preserving} and~\ref{defn BR preserving} can be directly extended to it and because within that search space, some of our following results will not hold true anymore. Compare the Prisoner's Dilemma with the Quality game, as presented by \citet{vonStengel21}:
\[ \begin{pmatrix}
2,2  & 0,3\\
3,0 & 1,1
\end{pmatrix}
\textnormal{ and }
\begin{pmatrix}
2,2 & 0,1\\
3,0 & 1,1
\end{pmatrix} \, . \]
Both games have 
the same unique \NE{}, namely, where \textnormal{PL1} plays the bottom row and \textnormal{PL2} plays the right column. But the best response of \textnormal{PL2} to \textnormal{PL1} playing the top row is different in the two games. This example illustrates the fact that strictly dominated strategies will never be a best response, and so they will never appear in a \NE{} (nor in a best-response set). Therefore, we can think of a game transformation procedure that iteratively detects strictly dominated strategies and sets their payoffs 
to a large negative number. This transformation universally preserves \NEs{}, but it does not universally preserve best-response sets. Note that this game transformation is not separable because its maps $h_{\ks}^i$ now need to take all utility payoffs of the game into consideration, and not only what utility player $i$ receives from strategy profile~$\ks$.

In a similar fashion, one may think of best-response-preserving transformations that are not PATs. This was studied extensively by \citet{Liu96}, who discusses the following example of $3 \times 2$ payoff matrices of \textnormal{PL1} in $2$-player games:
\begin{align}
\label{matrix ex equal BR sets but not PAT eqvl}
A = \begin{pmatrix}
6  & 0\\
0 & 6\\
4 & 4
\end{pmatrix}
\textnormal{ and } \,
A' = \begin{pmatrix}
6  & 0\\
2 & 5\\
4 & 4
\end{pmatrix} \, . 
\end{align}
\begin{figure}[t]
    \centering
    \includegraphics[width=7cm]{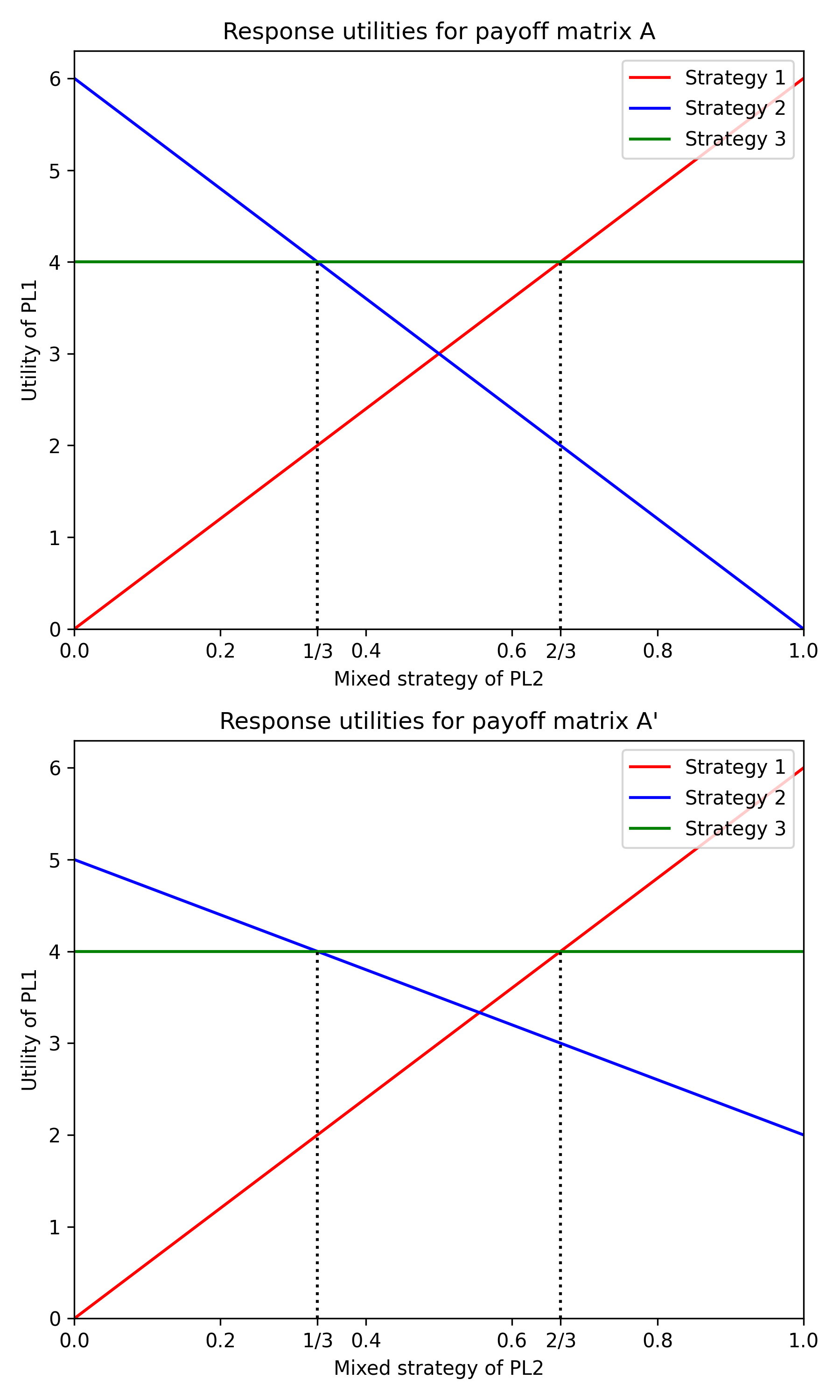}
    \caption{The utility payoffs of each pure strategy $1,2,3$ of \textnormal{PL1} in response to the mixed strategy of \textnormal{PL2} that plays $1$ with probability $x$. Plots correspond to 
    matrices $A$ and $A'$ from (\ref{matrix ex equal BR sets but not PAT eqvl}). The best-response set to a strategy $(x,1-x)$ 
    of \textnormal{PL2} will be all convex combinations of pure strategies of \textnormal{PL1} that are maximal at $x$ in the respective plot.}
    \label{fig:ex equal BR sets but not PAT eqvl}
\end{figure} 
As visualized by Figure~\ref{fig:ex equal BR sets but not PAT eqvl}, the best responses of \textnormal{PL1} to any mixed strategy of \textnormal{PL2} are the same in $A$ and $A'$. However, $A'$ cannot be obtained from $A$ through a PAT: If there were such a PAT, then the payoff from profile $(2,1)$ requires a shift of $c_1^1 = 2$. Hence, the payoff from profile $(1,1)$ requires a scaling of $\alpha^1 = \frac{2}{3}$. But these components of a positive affine transformation do not work out for the payoff from profile $(3,1)$, leaving us with a contradiction.

\citet{Liu96} develops a polynomial-time method, called \textit{bi-affine} transformation, that determines whether two $2$-player normal-form games have the same best-response sets. The procedure detects which strategies and strategy pairs are essential, and derives that only the essential pairs need to be in a positive affine relationship. Hence, the method includes PATs, but it is also more powerful than that. \citeauthor{Liu96}'s dissertation (\citeyear{Liu95old}) extends those ideas to $N$-player games ($N \geq 3$). But in $N$-player games, the method downgrades to a sufficient condition: Two $N$-player games ($N \geq 3$) may have the same best-response sets while not being a \textit{quasi-affine} transformation of each other. Furthermore, their method becomes computationally inefficient. 
In fact, we have shown in Section~\ref{sec:decaboutbrs} 
more generally that determining whether two $3$-players games have the same BR sets is $\mathbf{co\text{-}NP}$-hard. 

\citeauthor{Liu96} concludes with an immediate open problem for future work: to characterize games with the same \NEs{}. To that end, \citet{10.1016/j.tcs.2008.07.021} proves that it is $\mathbf{NP}$-complete to decide whether two $2$-player games share a common \NE{}, and that it is $\mathbf{co\text{-}NP}$-hard to decide whether two $2$-player games have the same \NE{} set. 

In light of these negative results about characterizing best-response equivalence and \NE{} equivalence in full generality - assuming the well-accepted complexity belief $\mathbf{co\text{-}NP} \neq \mathbf{P}$ - we restrict our focus to a subclass of equivalence-preserving transformations based on separability. We argue that among naturally defined subclasses, separable game transformations constitute a maximal subclass for which it is still open whether it contains tractable and equivalence-preserving transformations aside from PATs.

\section{Transformations that Preserve Nash Equilibrium Sets or Best-Response Sets}
\label{sec:strat equiv preserving}
To our knowledge, the results of this section are all novel unless explicitly stated otherwise. They can be summarized in the following statement.
\begin{thm}
\label{equiv charact of PAT}
Let $H = \{H^i\}_{i \in [N]}$ be a separable game transformation. Then:
\begin{align}
\label{mt1} \tag{i}
&\, H \text{ universally preserves \NE{} sets} \\
\label{mt2} \tag{ii}
&\iff \text{for each player~$i$, map $H^i$ universally} 
\\
\nonumber
&\, \quad \quad \quad \text{preserves best responses} \\
\label{mt3} \tag{iii}
&\iff H \text{ is a positive affine transformation}.
\end{align}
\end{thm}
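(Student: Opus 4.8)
The plan is to close the cycle of implications, drawing it as (\ref{mt1}) $\implies$ (\ref{mt2}) $\implies$ (\ref{mt3}) $\implies$ (\ref{mt1}). Two of these three links are already in hand and require no new argument: (\ref{mt1}) $\implies$ (\ref{mt2}) is exactly the second conclusion of Proposition~\ref{two conclusions of strat equiv preserving}, and (\ref{mt3}) $\implies$ (\ref{mt1}) is Lemma~\ref{multiplayer PAT preserves} (which also records (\ref{mt3}) $\implies$ (\ref{mt2})). As noted after the definitions, (\ref{mt2}) $\implies$ (\ref{mt1}) additionally holds straight from the definition of \NEs{}, so one could equivalently write the cycle as (\ref{mt1}) $\iff$ (\ref{mt2}) together with (\ref{mt2}) $\implies$ (\ref{mt3}). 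Either way, the only implication I actually need to prove is (\ref{mt2}) $\implies$ (\ref{mt3}).

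For that implication I would fix a player $i$ and carry out the argument below; it is written for $i = 1$ to keep the indices readable, but it is symmetric in $i$, so the conclusion for the remaining players follows verbatim. Assuming $H^1$ universally preserves best responses, Lemma~\ref{br implies opp dependent} lets me drop the dependence on player $1$'s own index and write the maps as $h^1_{k_2, \dots, k_N}$. I then want to feed each such map into Corollary~\ref{affine linearity corollary}: its two hypotheses are supplied precisely by Lemma~\ref{three implications}, whose second conclusion gives that each $h^1_{k_2, \dots, k_N}$ is strictly increasing (hence monotone) and whose third conclusion gives Property~(\ref{distance distortion ind of point}), which is literally the hypothesis~(\ref{distance preserving}) of the corollary. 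The corollary then produces constants $a_{k_2, \dots, k_N}, c_{k_2, \dots, k_N} \in \R$ with $h^1_{k_2, \dots, k_N}(z) = a_{k_2, \dots, k_N}\,z + c_{k_2, \dots, k_N}$.

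The last thing to verify is that the slopes are all equal and positive. Positivity is immediate, since a strictly increasing affine map has strictly positive slope. For equality I would invoke the first conclusion of Lemma~\ref{three implications}, ``the maps equally distort distances'': for all $z,w \in \R$,
\[
a_{k_2, \dots, k_N}\,(z - w) = \Delta h^1_{k_2, \dots, k_N}(z,w) = \Delta h^1_{1, \dots, 1}(z,w) = a_{1, \dots, 1}\,(z - w),
\]
which forces $a_{k_2, \dots, k_N} = a_{1, \dots, 1} =: \alpha^1 > 0$ independently of $k_2, \dots, k_N$. Hence $h^1_{k_2, \dots, k_N}(z) = \alpha^1 z + c_{k_2, \dots, k_N}$, exactly the form of the maps induced by a PAT (cf.\ the Example and Definition~\ref{multiplayer PAT defn}). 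Collecting the scalars $\alpha^i > 0$ and the constant families $C^i = \{c^i_{k_{-i}}\}$ over all players $i$ then shows $H = \{\alpha^i, C^i\}_{i \in [N]}$ is a positive affine transformation, which is (\ref{mt3}).

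\textbf{Where the difficulty sits.}
At the level of this theorem there is essentially no obstacle remaining: every hard step has already been absorbed into the preceding lemmas, and what is left is bookkeeping --- apply Corollary~\ref{affine linearity corollary} map by map, then read off a common positive slope from Lemma~\ref{three implications}. If I were building the development from scratch, the genuine difficulty would sit much earlier, in Lemma~\ref{relating distance maps}: for arbitrary reals $z,z',w,w'$ one must hand-craft a single utility function $u_1$ realizing exactly those values at the relevant pure profiles, and this is legitimate only because $H^1$ is assumed applicable to \emph{every} game. That universal applicability is the real engine of the characterization; once Lemmas~\ref{relating distance maps} and~\ref{three implications} have converted it into the Cauchy-type condition~(\ref{distance distortion ind of point}), the Darboux result of Lemma~\ref{helping lemma} finishes the analysis and the theorem drops out.
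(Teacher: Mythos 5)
Your proposal is correct and follows essentially the same route as the paper: close the cycle via (\ref{mt3}) $\implies$ (\ref{mt1}) (Lemma~\ref{multiplayer PAT preserves}), (\ref{mt1}) $\implies$ (\ref{mt2}) (Proposition~\ref{two conclusions of strat equiv preserving}), and prove (\ref{mt2}) $\implies$ (\ref{mt3}) by feeding each $h^1_{k_2,\dots,k_N}$ into Corollary~\ref{affine linearity corollary} using the second and third conclusions of Lemma~\ref{three implications}, then extracting a common positive slope from the first conclusion. Your explicit computation $a_{k_2,\dots,k_N}(z-w) = \Delta h^1_{k_2,\dots,k_N}(z,w) = \Delta h^1_{1,\dots,1}(z,w) = a_{1,\dots,1}(z-w)$ just makes precise the step the paper states as ``necessarily implied,'' so there is nothing to add.
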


Lemma~\ref{multiplayer PAT preserves} gives (\ref{mt3})$\implies$(\ref{mt1}), and so the novel part of Theorem~\ref{equiv charact of PAT} is the implication chain (\ref{mt1})$\implies$(\ref{mt2})$\implies$(\ref{mt3}). The key property that enables us to develop this chain is that we require the separable game transformations $H = \{H^i\}_{i \in [N]}$ to be \textit{universally} applicable, no matter the game $G = \{u_i\}_{i \in [N]}$ we have at hand. 

We shall state two algorithmic consequences of Theorem~\ref{equiv charact of PAT}.
\begin{cor}
Given two normal-form games, we can decide within polynomial time whether one is a transform of the other through an equivalence-preserving separable game transformation.
\end{cor}
This is because deciding whether a game is a PAT transform of another reduces to solving a linear (in-)equation system for the variables $\{ \alpha^i, C^i \}_{ i \in [N] }$. A case distinction is needed for solution points that take on values $\alpha^i = 0$.

\begin{cor}
Given a 2-player normal-form game $G$, we can find a transform $G'$ of it (if it exists) via an equivalence-preserving separable game transformation, such that $G'$ is a zero-sum or rank-1 game. With that, a Nash equilibrium for $G$ can be computed subsequently. Both take polynomial time.
\end{cor}

This follows from the results in \cite{HeymanG23,main}.

Before tackling Theorem~\ref{equiv charact of PAT}, let us characterize a special property that a game transformation can satisfy in which the strategy choice of player~$i$ does not influence the map that is being used to transform her utilities.
\begin{defn}
\label{defn opponent dependence}
Let map $H^i$ come from a separable game transformation $H$. Then we say that $H^i$ only depends on the strategy choices of the opponents if for all pure strategy choices $\ks_{-i} \in S^{-i}$ of the opponents, we have the map identities 
$h_{1, \ks_{-i}}^i = \ldots = h_{m_i, \ks_{-i}}^i\,: \R \to \R$. 
\end{defn}

Next, we can show (\ref{mt1})$\implies$(\ref{mt2}).
\begin{prop}
\label{two conclusions of strat equiv preserving}
Let $H = \{H^i\}_{i \in [N]}$ be a separable game transformation that universally preserves \NE{} sets and consider the map $H^i$ of a player~$i$. Then $H^i$ only depends on the strategy choices of the opponents. Furthermore, $H^i$ universally preserves best responses.
\end{prop}
\begin{proof}[Proof sketch] \, 
\paragraph{1.} Such a universally preserving transformation $H$ should in particular not change the \NE{} set for a trivial game in which all players receive the same constant utility $z \in \R$ from all strategy profiles. In such a game, the whole strategy set~$S$ will make the \NE{} set. For that to also be the case in the transformed game, we show for every player $i$, that the transformations maps $h_{1, \ks_{-i}}^i, \ldots, h_{m_i, \ks_{-i}}^i$ must all evaluate the same on any input value $z$.
\paragraph{2.} Let $u_i$ be an arbitrary utility function of player~$i$. Complete $u_i$ to a full game $G$ by setting the utilities of all other players to the constant payoff of $0$. This makes any strategy $s^j$ of another player $j \neq i$ always a best-response strategy in $G$. We can then show that this must also hold in the transformed game $H(G)$, using the first conclusion. Therefore, we get the following equivalence chain:
\begin{enumerate}
    \item[(a)] a strategy $s^i$ of player $i$ is a best response to a profile $s^{-i}$ of the opponent players and with respect to $u_i$ if and only if
    \item[(b)] $(s^i, s^{-i})$ is a \NE{} of $G$  if and only if
    \item[(c)] $(s^i, s^{-i})$ is a \NE{} of $H(G)$  if and only if
    \item[(d)] $s^i$ a best response to $s^{-i}$ with respect to $H^i(u_i)$.
\end{enumerate}
\end{proof}
The first conclusion captures the intuition that if the maps $h_{\ks}^i$ from $H^i$ would depend on the strategy choice of player $i$, then in the transformed game $H(G)$, player $i$ may need to adjust her strategy choice to those $h_{\ks}^i$ that map payoffs to high values. This would affect the strategic decision making of player $i$ and therefore the \NE{} set. Similar reasoning provides us with a related (but independent) result:

\begin{lemma}
\label{br implies opp dependent}
If map $H^i$ universally preserves best responses, then $H^i$ only depends on the strategy choices of the opponents.
\end{lemma}

Due to Proposition~\ref{two conclusions of strat equiv preserving}, we can transition to the analysis of transformation maps $H^i$ that universally preserve best responses. Thus from now on, our results also become relevant to game theory research that focuses on best-response sets, such as best-response dynamics  or fictitious play.

Proposition~\ref{two conclusions of strat equiv preserving} moreover allows us to restrict our analysis to the map $H^1$ for \textnormal{PL1} w.l.o.g.~because any results for $H^1$ will analogously also hold for maps $H^2,\ldots,H^N$. By Lemma~\ref{br implies opp dependent}, we can also drop the dependence of $H^1$ on $k_1$ and write 
$H^1 := \big\{ h_{\ks_{-1}}^1 : \R \rightarrow \R \big\}_{\ks_{-1} \in S^{-1}}$.

For each pure-strategy map $h_{\ks_{-1}}^1$ we introduce its \textit{distance distortion} function which takes two utility values and measures their distance after a $h_{\ks_{-1}}^1$-transformation:
\begin{align}
\label{def dist dist fct}
\begin{aligned}
\Delta h_{\ks_{-1}}^1 : \R \times \R \rightarrow \R \, , \, \,
(z,w) \mapsto h_{\ks_{-1}}^1(z) \, - \, h_{\ks_{-1}}^1(w)
\end{aligned}
\end{align}

The following lemma reveals an important preliminary observation on how the distance distortion functions $\Delta h_{\ks_{-1}}^1$ relate to each other. It highlights how the distorted utility distances are connected upon a strategy change of a player $j \neq 1$ from, e.g., some pure strategy $k_j \neq 1$ to their first pure strategy $1 \in [m_j]$. It is again crucial that $H^1$ preserves best responses \textit{universally} in order to deduce these global properties of and connections between the maps within~$H^1$.

\begin{lemma}
\label{relating distance maps}
Suppose transformation map $H^1$ universally preserves best responses. Take a player $j \in [N] \setminus \{1\}$ and profile $\ks_{-1} \in S^{-1}$ with $k_j \neq 1$. Define $\ks_{-1}' \in S^{-1}$ to be the same as $\ks_{-1}$ except for player~$j$'s choice which shall be set to $k_j' = 1$. Then, for all $z,z',w,w' \in \R$:
\\
    $z-w \geq z'-w' \, \, \iff \, \, \Delta h_{\ks_{-1}}^1(z,w) \geq \Delta h_{\ks_{-1}'}^1(z',w')$.
\end{lemma}
\begin{proof}[Proof sketch] Construct a utility function $u^1$ for each set of values for $j, \ks_{-1}, z, z', w$, and $w'$. Namely, set $u_1(1, \ks_{-1}) := z$ and $u_1(1, \ks_{-1}') := w'$, and for all strategies $l \in [m_1] \setminus \{1\}$ set $u_1(l, \ks_{-1}) := w$ and $u_1(l, \ks_{-1}') := z'$. Observe that uniformly randomizing over $\ks_{-1}$ and $\ks_{-1}'$ not only makes a correlated strategy of the opponents, but also a valid mixed strategy profile. Hence, the left hand side of the equivalence 
can be reinterpreted as strategy $1 \in [m_1]$ performing better for player $1$ than any other of her strategies $l \in [m_1] \setminus \{1\}$ if player~$j$ uniformly mixes over strategies $k_j$ and $k_j'$ and if all other players $r \notin \{1,j\}$ play their respective strategy $k_r \in [m_r]$. We then derive the equivalence 
by using that $H^1$ preserves strategy $1$ being such a best response and by using Lemma~\ref{br implies opp dependent}.
\end{proof}

Next, observe that by definition, these distance distortion functions are skew-symmetric, that is, $\forall \, z,w \in \R:$
    \\
    $\Delta h_{\ks_{-1}}^1(z,w) = \, -\Delta h_{\ks_{-1}}^1(w,z)$.

With the following lemma, we further tighten the connection between the pure-strategy maps $h_{\ks_{-1}}^1$ through their distance distortion functions. Last but not least, we shine some light on how those maps $h_{\ks_{-1}}^1$ behave individually in the subsequent lemma. 

\begin{lemma}
\label{same distance distortion functions}
If map $H^i$ universally preserves best responses, then the pure-strategy maps in $H^1$ equally distort distances. That is, 
$\forall \, \ks_{-1} \in S^{-1}:$ $\Delta h_{\ks_{-1}}^1 = \Delta h_{\1_{-1}}^1$, 
where $\1_{-1} := (1, \ldots, 1) \in S^{-1}$.
\end{lemma}
\begin{proof}[Proof sketch] Make iterative use of Lemma~\ref{relating distance maps} for all other players $j \neq 1$, and make use of the skew-symmetry.
\end{proof}

\begin{lemma}
\label{strat specific maps behaviour}
If map $H^i$ universally preserves best responses, then for all $\ks_{-1} \in S^{-1}$:
\begin{enumerate}
\item map $h_{\ks_{-1}}^1$ is strictly increasing, and
\item map $h_{\ks_{-1}}^1$ distorts distances independently of their reference points:
\begin{align*}
\forall z,z',\lambda \in \R \, : \, \Delta h_{\ks_{-1}}^1(z + \lambda, z) = \Delta h_{\ks_{-1}}^1(z' + \lambda, z') \, .
\end{align*}
\end{enumerate}
\end{lemma}
\begin{proof}[Proof sketch] For the first conclusion make use of Lemma~\ref{relating distance maps} for values $z' = w'$, and of Lemma~\ref{same distance distortion functions}. For the second conclusion, utilize skew-symmetry together with the same two lemmata.
\end{proof}

With Lemmata~\ref{same distance distortion functions} and~\ref{strat specific maps behaviour}, we can finally show that positive affine transformations are the only game transformations that universally preserve best responses. Intuitively speaking, the second conclusion of Lemma~\ref{strat specific maps behaviour} states that taking a step of length $\lambda$ in the domain space consistently maps to taking a step of some other length in the range space, independently of the base point $z$ from which we take such a step. This brings us to two known results from the analysis literature. Recall that a function $h: \R \rightarrow \R$ is called linear if there exists some $a \in \R$ such that 
    $\forall z \in \R \, : \, h(z) = az$. 
A function $h: \R \rightarrow \R$ is said to be additive if it satisfies 
    $\forall x,y \in \R \, : \, h(x+y) = h(x) + h(y)$.
\begin{lemma}[\cite{BSMA_1875__9__281_1,Reem2017RemarksOT}]
\label{helping analysis lemma}
If a map $h: \R \rightarrow \R$ is monotone and additive, then it is also linear.
\end{lemma}
\begin{cor}
\label{affine linearity corollary}
Let $h: \R \rightarrow \R$ be monotone and satisfy for all $z,z',\lambda \in \R:$ 
    $h(z+\lambda) - h(z) = h(z' + \lambda) - h(z')$.
\\
Then h is affine linear, i.e., there exist some $a,c \in \R$ such that for all 
    $\forall z \in \R \, : \, h(z) = az + c$.
\end{cor}

This brings us to the completion of this section.
\begin{proof}[Proof sketch of Theorem~\ref{equiv charact of PAT}] \, \\
    Implication~(\ref{mt3})$\implies$(\ref{mt1}) follows from Lemma~\ref{multiplayer PAT preserves}, and implication~(\ref{mt1})$\implies$(\ref{mt2}) follows from Proposition~\ref{two conclusions of strat equiv preserving}. For (\ref{mt2})$\implies$(\ref{mt3}), recall that by symmetry, our results for $H^1$ hold analogously for all maps~$H^i$. By Lemmata~\ref{br implies opp dependent} and~\ref{strat specific maps behaviour}, the maps $h_{\ks}^i = h_{\ks_{-i}}^i$ satisfy the conditions of Corollary~\ref{affine linearity corollary}. Thus, there exist parameters $a_{\ks_{-i}}^i, c_{\ks_{-i}}^i \in \R$ for each~$\ks_{-i} \in S^{-i}$ such that
    $\forall z \in \R \, : \, h_{\ks_{-i}}^i(z) = a_{\ks_{-i}}^i \cdot z + c_{\ks_{-i}}^i$ .
    \\
    Lemma~\ref{same distance distortion functions} implies $a_{\ks_{-i}}^i = a_{\1_{-i}}^i$ for all $\ks_{-i} \in S^{-i}$. Therefore, we only have to keep track of one scaling parameter $\alpha^i$ for all the maps within $H^i$. With the first conclusion of Lemma~\ref{strat specific maps behaviour}, we obtain $\alpha^i > 0$. Putting everything together, we have shown that $H = (H^1, \ldots, H^N)$ makes a positive affine transformation.
    
\end{proof}

Theorem~\ref{equiv charact of PAT} gives two novel equivalent characterizations of PATs that highlight their special status among game transformations: PATs are the only separable game transformations that always preserve the \NE{} set or, respectively, the best-response sets. 

One way to circumvent this result is to focus on game transformations that we only care to apply on particular subclasses of $N$-player games. Preferably, the game properties defining such a subclass would be generic enough to still contain "most" games. On the other hand, one may instead also consider non-separable game transformations as discussed in Section~\ref{sec:discussion}. 

\section{Further Related Literature}
\label{sec:literature review}
Much work has gone into identifying when two games can be considered strategically equivalent.

Strategic similarity, for example, is an important aspect of Potential Games (cf. \citet{MONDERER1996124}). \citet{MORRIS2004260} noted that a game $G$ is a weighted potential game if and only if it is the PAT transformation of an identical interest game\footnote{Identical interest game: Given an action profile $s$, each player shall receive the same utility from $s$.}. They also characterized when two given games are \textit{best-response equivalent}, \textit{better-response equivalent} or \textit{von Neumann-Morgenstern equivalent}. The former and latter are directly tied to our concepts of preserving best-response sets and to PATs. Unfortunately, we were not able to base the second part of Theorem~\ref{equiv charact of PAT} on the insights from \citeauthor{MORRIS2004260} because their characterization for best-response equivalence only holds for games that satisfy specific properties.

\citet{hammond} described that the strategic decision-making in a game in mixed strategies does not depend on the player's numerical utility values, but solely on the preferences that the utility functions induce over the strategies. 
In Appendix~\ref{sec:utility_theory}, we give some further background on utility theory in order to put Hammond's work into our context. Using the Expected Utility Theorem -- cf., e.g., \citet{MasCol} -- \citeauthor{hammond} deduced that utility functions that induce the same preferences can only differ up to a positive affine transformation. 
Note 
that the property of preserving the player's preferences is, in general, strictly harder to satisfy than preserving best responses (and, hence, Nash equilibria). Thus, our Theorem \ref{equiv charact of PAT} generalizes their result to the broader question of strategic equivalence.

Moving to more broadly related work, \citeauthor{GABARRO20116675} \shortcite{GABARRO20116675,gabarro_garcia_serna_2013} gave several complexity-theoretic results for the problem of deciding whether two pure strategy games are \textit{isomorphic} w.r.t.\ a notion of game transformation that can help us understand the symmetries within a game \cite[Chapter 3]{harsanyi1988general}. \citet{McKinsey} and \citet{CoopGamesIsomorphism} studied two notions of game equivalency specific to cooperative games.

Finally, there are other lines of related research that work more explicitly with different notions of transforming a game and preserving strategic features \cite{RM-759-PR,10.2307/1912320,ELMES19941,CASAJUS2003267,BergeVaismanEq,9585431}.

\section{Conclusion}
\label{sec:conclusion}
In this paper, we first gave hardness results about deciding whether a strategy constitutes a best response or whether two games have the same best-response sets. Next, we introduced separable game transformations for multiplayer games, and define the properties (i) universally preserving \NE{} sets and (ii) universally preserving best responses. It is well-known that PATs universally preserve \NE{} sets. We showed that separable game transformations which universally preserve \NE{} sets also universally preserve  best responses. In the subsequent results, we derived further that if a separable game transformation universally preserves best responses then it is a positive affine transformation. 

When faced with a strategic interaction 
it can be highly beneficial to consider equivalent variations of it that are easier to analyze. In this paper, we shed light on why PATs have become the go-to transformation method for that purpose, reinforcing their standing as the standard off-the-shelf approach. 
The current literatures on game theory and on decision making in AI are lacking methods to detect or generate strategically equivalent games, and 
we hope that our results can serve as guidance to the development of any such 
detection or generation toolkit.

\appendix

\section*{Acknowledgments}

We want to thank Sebastian van Strien for the valuable discussions
related to this project and to thank Bernhard von Stengel whose feedback inspired the development of Sections 4 and 6. 
We are also grateful to Vojtěch
Kovařík, Rinor Cakaj, and the anonymous reviewers for their valuable improvement suggestions for this paper. Finally, we thank the
Cooperative AI Foundation, Polaris Ventures (formerly the Center
for Emerging Risk Research) and Jaan Tallinn’s donor-advised fund
at Founders Pledge for financial support.

%% The file named.bst is a bibliography style file for BibTeX 0.99c
\bibliographystyle{named}
\bibliography{ijcai24}

\clearpage

\section{Proofs for Section~\ref{sec:decaboutbrs}}
\label{sec:dec proofs}
To our knowledge, the results and proofs of this section are all novel unless explicitly stated otherwise.

\begin{defn*}[{\sc CheckIfEverBR}]
    Given a $3$-player normal-form game, does there exist mixed strategies $\rs \in \Delta(S^2)$ of \textnormal{PL2} and $\strats \in \Delta(S^3)$ of \textnormal{PL3} such that pure strategy $1$ of \textnormal{PL1} is a best response to $(\rs, \strats)$?
\end{defn*}

In general, for computational problems involving $N$-player games $G$ with strategy sets $(S^i)_{i \in [N]}$ and utility functions $(u_i)_{i \in [N]}$, we are interested in their computational complexities in terms of $\sum_i |S^i|$ and the binary encoding of all utility payoffs $\Big( u_i(\strats) \Big)_{\strats \in S, i \in [N]}$. For that, we require that utility payoffs take on rational values only. 

We aim to prove Proposition~\ref{checkifeverbr}:
\begin{prop*}
    {\sc CheckIfEverBR} is $\mathbf{NP}$-hard.
\end{prop*}

We achieve this by a reduction from the {\sc BiClique} problem. Recall that a bipartite graph $G = (V \cup W,E)$ is an undirected graph such that each edge $e \in E$ has one endpoint in $V$ and the other in $W$.
\begin{defn*}[{\sc BiClique}]
    Given a bipartite graph $G = (V \cup W,E)$ and integer $1 \leq K \leq m + n$, are there subsets $V^* \subseteq V$ and $W^* \subseteq W$ with $|V^*| = K = |W^*|$ and $(v,w) \in E$ for all $v \in V^*, w \in W^*$?
\end{defn*}
For problems involving bipartite graphs $G = (V \cup W,E)$, we are interested in their computational complexities in terms of $m := |V|$, $n := |W|$ and $l := |E|$. 

The complexity of {\sc BiClique} is known in the literature.
\begin{lemma*}[\citet{Garey90}, Problem GT24] \,
    {\sc BiClique} is $\mathbf{NP}$-complete.
\end{lemma*}

Before we get to the proof of Proposition~\ref{checkifeverbr}, let us give a trivial yes instance and a trivial no instance of {\sc CheckIfEverBR}. This will be used in the proof.

The trivial yes instance shall be $S^1 = \{1\} = S^2 = S^3$ and $u_1(1,1,1) = 0 = u_2(1,1,1) = u_3(1,1,1)$. Then pure strategy $1$ of \textnormal{PL1} is a best response to $(1,1)$.

The trivial no instance shall be $S^1 = \{1, 2\}$, $S^2 = \{1\} =S^3$, $u_1(1,1,1) = 0 = u_2(\cdot,1,1) = u_3(\cdot,1,1)$ and $u_1(2,1,1) = 1$. Then pure strategy $1$ is strictly dominated by $2$ and therefore never a best response to a profile of the opponents.
\begin{proof}[Proof of Proposition~\ref{checkifeverbr}]
    \, \\
    Reduction from {\sc BiClique}. Let $G = (V \cup W,E)$ and $1 \leq K \leq m + n$ be the {\sc BiClique} instance. 

    \medskip
    \textit{Trivial cases:} Let us first remove a couple of edge cases. They are not mutually exclusive, but one can just check these case conditions in the following order until one is satisfied (if at all).

    Case 1: If $K \geq \min\{m,n\} + 1$. Then we have a no instance of {\sc BiClique}. So construct the trivial no instance of {\sc CheckIfEverBR}.

    Case 2: If $K = m$, check in $\mathcal{O}(nm)$ time by going through $W$ if there are at least $K$-many vertices $w \in W$ that satisfy $(v,w) \in E$ for all $v \in V$. If so, then we have a yes instance of {\sc BiClique} by setting $V^* = V$ and $W^*$ equal to the $K$-many found $w$'s. So construct the trivial yes instance of {\sc CheckIfEverBR}. If they do not exist, however, then we have a no instance of {\sc BiClique} because we couldn't find set $W^*$ of size $K$ that matches the only possibility $V^* = V$. So construct the trivial no instance of {\sc CheckIfEverBR}.

    Case 3: If $K = n$, do the analogous procedure as in Case 2, except with reversed roles for $v$ and $w$.
    
    Case 4: If $K = m - 1$, check in $\mathcal{O}(mnm)$ time if there exists $\bar{v} \in V$ such that are at least $K$-many vertices $w \in W$ that satisfy $(v,w) \in E$ for all $v \in V \setminus \{\bar{v}\}$. If so, then we have a yes instance of {\sc BiClique} by setting $V^* = V \setminus \{\bar{v}\}$ and $W^*$ equal to the $K$-many found $w$'s. So construct the trivial yes instance of {\sc CheckIfEverBR}. If they do not exist, however, then we have a no instance of {\sc BiClique} because we couldn't find set $W^*$ of size $K$ that matches the only possibilities $V^* = V \setminus \{\bar{v}\}$ for some $\{\bar{v}\} \in V$. So construct the trivial no instance of {\sc CheckIfEverBR}.

    Case 5: If $K = n - 1$, do the analogous procedure as in Case 4, except with reversed roles for $v$ and $w$.

    Case 6: If neither of the previous case conditions are satisfied. The rest of this proof is considering this case now.
    
    \medskip
    \textit{Construction of the corresponding {\sc CheckIfEverBR} instance:}
    Set $S^2 = V$, $S^3 = W$ and $S^1 = \{1\} \cup \{v\}_{v \in V} \cup \{w\}_{w \in W} \cup \{(v,w)\}_{(v,w) \notin E}$. Intuitively, we want to interpret a mixed strategy $\rs$ of PL2 as PL2 choosing the support $\supp(\rs) := \{ v \in V : \rs(v) > 0 \}$ as the subset $V^*$ of $V$ for the biclique. Analogously, the support of $\strats$ of PL3 shall give the subset $W^*$ of $W$ for the biclique. We make a strategy $v$ (resp. $w$) of PL1 very attractive for PL1 in the case that PL2 (resp. PL3) play their corresponding strategy $v$ (resp. $w$) with too much probability. This accomplishes that in any potential certificate $(\rs, \strats)$, PL2 and PL3 mix over at least $K$ strategies. We also make a strategy $(v,w) \notin E$ of PL1 very attractive for PL1 in the case that PL2 and PL3 both play their corresponding strategies $v$ and $w$ with any significant probability. This accomplishes that in any potential certificate $(\rs, \strats)$, PL2 and PL3 put non-negligible weight on strategies $v$ and $w$ only if $(v,w) \in E$. Let us proceed with the actual utility payoffs.

    Set $u_2(\cdot,\cdot,\cdot) = 0 = u_3(\cdot,\cdot,\cdot)$ because those payoffs are irrelevant. Next, set $u_1(1,\cdot,\cdot) = 1$. Finally, set 
    \begin{flalign*}
        &\forall v,v' \in V \, : \quad u_1(v,v',\cdot) = 
        \begin{cases}
        K + 1 &\text{if } v = v'\\
        0 &\text{if } v \neq v'
        \end{cases} \, ,&&
    \end{flalign*}
     
    \begin{flalign*}
        &\forall w,w' \in W \, : \quad u_1(w,\cdot,w') = 
        \begin{cases}
        K + 1 &\text{if } w = w'\\
        0 &\text{if } w \neq w'
        \end{cases} \, ,&&
    \end{flalign*}
    and
    \begin{flalign*}
        &\forall (v,w) \notin E \, : \quad u_1 \Big( (v,w),v',w' \Big) =&& \\
        &\begin{cases}
        (m-K) (n-K) (K+1)^2 &\text{if } (v,w) = (v',w')\\
        0 &\text{if } (v,w) \neq (v',w')
        \end{cases} \, .&&
    \end{flalign*}    
    Note that by assumption of not being in Cases 1, 2, and 3, we have $(m-K) (n-K) (K+1)^2 > 0$.

    \medskip
    \textit{Analysis of the corresponding {\sc CheckIfEverBR} instance:}
    First, we observe that for any mixed strategies $\rs \in \Delta(S^2)$ of PL2 and $\strats \in \Delta(S^3)$ of PL3, we have:
    \begin{align*}
        \forall v \in V \, : \, u_1(v,\rs,\strats) &= \sum_{v' \in V, w' \in W} \rs(v') \strats(w') u_1(v,v',w') 
        \\
        &= \sum_{w' \in W} \rs(v) \strats(w') (K + 1) 
        \\
        &= \rs(v) (K + 1) \, ,
    \end{align*}
    and, analogously,
    \begin{align*}
        \forall w \in W \, : \, u_1(w,\rs,\strats) = \strats(w) (K + 1) \, ,
    \end{align*}
    and
    \begin{align*}
        &\forall (v,w) \notin E \, :
        \\
        &u_1 \Big( (v,w),\rs,\strats \Big) = \sum_{v' \in V, w' \ in W} \rs(v') \strats(w') u_1 \Big( (v,w),v',w' \Big) 
        \\
        &= \rs(v) \strats(w) (m-K) (n-K) (K+1)^2 \, .
    \end{align*}
    Therefore, pure strategy $1$ is a best response to $(\rs, \strats)$ if and only if
    \begin{flalign}
    \label{cond1}
    \begin{aligned}
        \forall v \in V \, : \, \rs(v) &= \frac{1}{K + 1} u_1(v,\rs,\strats)&&
        \\
        &\leq \frac{1}{K + 1} u_1(1,\rs,\strats) = \frac{1}{K + 1} \, ,&&
    \end{aligned}
    \end{flalign}
    \begin{flalign}
    \label{cond2}
    \begin{aligned}
        \forall w \in W \, : \, \strats(w) &= \frac{1}{K + 1} u_1(w,\rs,\strats)&& 
        \\
        &\leq \frac{1}{K + 1} u_1(1,\rs,\strats) = \frac{1}{K + 1} \, ,&&
    \end{aligned}
    \end{flalign}    and    
    \begin{flalign}
    \label{cond3}
    \begin{aligned}
        &\forall (v,w) \notin E \, :&&
        \\
        &\rs(v)(m - K)(K + 1) \cdot \strats(w)(n - K)(K + 1)&& 
        \\
        &= u_1 \Big( (v,w),\rs,\strats \Big) \leq u_1(1,\rs,\strats) = 1 \, .&&
    \end{aligned}
    \end{flalign} 
    
    \medskip
    \textit{Equivalence of {\sc BiClique} and its corresponding {\sc CheckIfEverBR} instance:}
    Suppose the {\sc BiClique} instance be yes instance that falls into Case 6. Let furthermore $V^*$ and $W^*$ be a biclique certificate. Then, in the corresponding {\sc CheckIfEverBR} instance, choose the following strategies $(\rs, \strats)$ for PL2 and PL3: for $v \in V$ set 
    \begin{align*}
        \rs(v) = 
        \begin{cases}
        \frac{1}{K + 1} &\text{if } v \in V^*\\
        \frac{1}{m - K} \frac{1}{K + 1} &\text{if } v \notin V^*
        \end{cases} \, ,
    \end{align*}
    and for $w \in W$ set
    \begin{align*}
        \strats(w) = 
        \begin{cases}
        \frac{1}{K + 1} &\text{if } w \in W^*\\
        \frac{1}{n - K} \frac{1}{K + 1} &\text{if } w \notin W^*
        \end{cases} \, .
    \end{align*}
    Vectors $\rs$ and $\strats$ form well-defined mixed strategies because we are not in Cases 1, 2, and 3, and because 
    \begin{align*}
        \sum_{v \in V} \rs(v) &= \sum_{v \in V^*} \rs(v) + \sum_{v \notin V^*} \rs(v) 
        \\
        &= \sum_{v \in V^*} \frac{1}{K + 1} + \sum_{v \notin V^*} \frac{1}{m - K} \frac{1}{K + 1} 
        \\
        &= K\frac{1}{K + 1} + (m-K) \frac{1}{m - K} \frac{1}{K + 1} 
        \\
        &= 1\, ,
    \end{align*}
    and analogously $\sum_{w \in W} \strats(w) = 1$ . Moreover, conditions (\ref{cond1}), (\ref{cond2}), and (\ref{cond3}) are all satisfied. Hence, pure strategy $1$ of PL1 is a best response to $(\rs, \strats)$, and, therefore, the corresponding {\sc CheckIfEverBR} instance a yes instance as well.

    Now suppose the {\sc BiClique} instance falls into Case 6 and the corresponding {\sc CheckIfEverBR} instanc is a yes instance.  Let furthermore $(\rs, \strats)$ be the strategy certificate of PL2 and PL3 to which pure strategy $1$ of PL1 is a best response. Then, in the {\sc BiClique} instance we started with, consider the sets 
    \begin{align}
    \label{barV}
        \bar{V} := \Big\{ v \in V : \rs(v) > \frac{1}{m - K} \frac{1}{K + 1} \Big\} \, ,
    \end{align}
    and
    \[
        \bar{W} := \Big\{ w \in W : \strats(w) > \frac{1}{n - K} \frac{1}{K + 1} \Big\} \, .
    \]
    Then, we have for all $v \in \bar{V}$ and $w \in \bar{W}$:
    \begin{align*}
        \rs(v)(m - K)(K + 1) \cdot \strats(w)(n - K)(K + 1) > 1 \, .
    \end{align*}
    Therefore, since $(\rs, \strats)$ satisfies condition (\ref{cond3}) by assumption, we get for all $v \in \bar{V}$ and $w \in \bar{W}$ that $(v,w) \in E$. Further below, we show $|\bar{V}|, |\bar{W}| \geq K$. Therefore, choose any $V^* \subseteq \bar{V}$ and $W^* \subseteq \bar{W}$ with $|V^*| = K = |W^*|$, and $(V^*, W^*)$ makes a biclique certificate of the {\sc BiClique} instance. This shows that the {\sc BiClique} is therefore a yes instance as well.

    \smallskip
    \textit{Proving the subclaim that $|\bar{V}|, |\bar{W}| \geq K$:} We only prove $|\bar{V}| \geq K$ since $|\bar{W}| \geq K$ is proven analogously. We derive
    \begin{flalign*}
        1  &= \sum_{v \in V} \rs(v) = \sum_{v \in V^*} \rs(v) + \sum_{v \notin V^*} \rs(v)&&
        \\
        &\overset{(\ref{cond1}), (\ref{barV})}{\leq} \sum_{v \in \bar{V}} \frac{1}{K + 1} + \sum_{v \notin \bar{V}} \frac{1}{m - K} \frac{1}{K + 1}&&
        \\
        &= |\bar{V}|\frac{1}{K + 1} + (m - |\bar{V}|)\frac{1}{m - K} \frac{1}{K + 1} \, .&&
    \end{flalign*}
    Recall that $(m - K)(K + 1) > 0$ since we are not in Cases 1 and 2 (resp. 3). Moving all terms to one side in the above inequality chain and multiplying it by $(m - K)(K + 1)$ yields
    \begin{flalign*}
        0  &\leq (m-K)|\bar{V}| + (m - |\bar{V}|) - (m - K)(K + 1)&&
        \\
        &= m|\bar{V}| - K|\bar{V}| + m - |\bar{V}| - mK - m + K^2 + K&&
        \\
        &= m ( |\bar{V}| - K )- K ( |\bar{V}| - K ) - ( |\bar{V}| - K )&&
        \\
        &= (m - K - 1)(|\bar{V}| - K)\, .&&
    \end{flalign*}
    By Cases 1, 2 (resp. 3), and 4 (resp. 5), we have $K \leq m - 2$. Therefore, we can divide the above inequality chain by $m - K - 1$ to obtain $|\bar{V}| \geq K$.
\end{proof}

Proposition~\ref{checkifeverbr} allows us to study the following decision problem next.
\begin{defn*}[{\sc CheckIfSameBRs}]
    Given two $3$-player normal-form games with strategy set $S^1 \times S^2 \times S^3$, do they have the same best-response sets?
\end{defn*}
\begin{thm*}
    {\sc CheckIfSameBRs} is $\mathbf{co\text{-}NP}$-hard.
\end{thm*}
\begin{proof}
    We show that its complement, which we denote as {\sc CheckIfDiffBRs}, is $\mathbf{NP}$-hard by a reduction from {\sc CheckIfEverBR}. Given two $3$-player normal-form games with strategy set $S^1 \times S^2 \times S^3$ and utility functions $(u_i)_i$ and $(u_i')_i$ respectively, {\sc CheckIfDiffBRs} asks whether there exists a mixed strategy profile $\strats^{-i} \in \Delta(S^{-i})$ for some player $i \in \{1,2,3\}$ such that the best-response sets $\BR_{u_i}(\strats^{-i})$ and $\BR_{u_i'}(\strats^{-i})$ differ.
    
    Let $G$ be an instance of {\sc CheckIfEverBR}, that is, a $3$-player normal-form game. Denote its strategy set with $S^1 \times S^2 \times S^3$ and its utility functions with $(u_i)_i$. 
    Determine a strict lower bound $L := -1 + \min_{\strats \in \Delta(S)} \big\{ u_1(\strats) \big\}$ on the utilities PL1 may receive in $G$. Construct another game $G'$ with the same strategy set as $G$ and with utility functions $u_2' := u_2$, $u_3' := u_3$, and
    \begin{align*}
        u_1(k_1,k_2,k_3) := 
        \begin{cases}
        L &\text{if } k_1 = 1\\
        u_1(k_1,k_2,k_3) &\text{if } k_1 \neq 1
        \end{cases}
    \end{align*}
    for all $(k_1,k_2,k_3) \in S$. Then $(G,G')$ shall be the corresponding {\sc CheckIfDiffBRs} instance. Let us prove equivalence.

    Suppose $G$ is a yes instance of {\sc CheckIfEverBR}. Let $(\strats^2, \strats^3) \in \Delta(S^2) \times \Delta(S^3)$ be the strategy certificate to which pure strategy $1$ of PL1 is a best response, i.e., $1 \in \BR_{u_1}(\strats^2, \strats^3)$. PL1 has a second strategy (by the definition of a game), and by construction of $L$, strategy $1$ is strictly dominated by strategy $2$ in $G'$ for PL1. Therefore, $1$ can never be a best response in $G'$ for PL1. In particular, $1 \notin \BR_{u_1}(\strats^2, \strats^3)$. Hence, $(G,G')$ is a yes instance of {\sc CheckIfDiffBRs} as well.

    Suppose $(G,G')$ is a yes instance of {\sc CheckIfDiffBRs}. Since PL2 and PL3 receive the same utilities in $G$ and $G'$, their best-response sets will be equal. Therefore, the difference in best-response sets must be for PL1, that is, there exists a strategy certificate $(\strats^2, \strats^3) \in \Delta(S^2) \times \Delta(S^3)$ for which $\BR_{u_1}(\strats^2, \strats^3) \neq \BR_{u_1'}(\strats^2, \strats^3)$. By Corollary~\ref{convex payoff combi}, this means that the two sets do not contain the same pure best responses. Let us treat three imaginable situations (which are not mutually exclusive) separately.

    Situation 1: We have $1 \in \BR_{u_1}(\strats^2, \strats^3)$ but $1 \notin \BR_{u_1'}(\strats^2, \strats^3)$. Then, we are done because this shows that $G$ is also a yes instance of {\sc CheckIfEverBR}.
    
    Situation 2: There exists a pure strategy $k \in S^1 \setminus \{1\}$ with $k \in \BR_{u_1}(\strats^2, \strats^3)$ but $k \notin \BR_{u_1'}(\strats^2, \strats^3)$. Note that $u_1'$ only differs from $u_1$ in how much utility strategy $1  \in S^1$ yields under $u_1'$ and $u_1$, namely, less under $u_1'$. Thus, if $k \neq 2$ was a maximizer of $u_1( \cdot, \strats^2, \strats^3)$, then it must still be a maximizer of $u_1'( \cdot, \strats^2, \strats^3)$. So this situation will never occur because its premise will never hold.

    Situation 3: There exists a pure strategy $k \in S^1$ with $k \in \BR_{u_1'}(\strats^2, \strats^3)$ but $k \notin \BR_{u_1}(\strats^2, \strats^3)$. Then $k \neq 1$ because $1 \in S^1$ is never a best response in $G'$. Moreover, since the only change from $u_1$ to $u_1'$ is a decreased payoff of strategy $1 \in S^1$, this Situation 3 can only happen if $1 \in S^1$ is the sole maximizer of $u_1( \cdot, \strats^2, \strats^3)$. Thus, we are done because this shows that $G$ is also a yes instance of {\sc CheckIfEverBR}.

    In conclusion, we have shown overall that {\sc CheckIfDiffBRs} is $\mathbf{NP}$-hard, and thus, {\sc CheckIfSameBRs} is $\mathbf{co\text{-}NP}$-hard.
\end{proof}

\section{Proofs for Section~\ref{sec:strat equiv preserving}}
\label{sec:main proofs}

To our knowledge, the results and proofs of this section are all novel.

First, some further notation for this appendix. We may write $u \equiv \lambda$ to refer to a function $u: \mathcal{D} \rightarrow \R$ that is a constant function on its domain $\mathcal{D}$, set to the value $\lambda \in \R$. Moreover, let $e_k \in \R^{n}$ ($M \in \N$) stand for the $k$-th standard basis vector, i.e., with a $1$ in its $k$-th entry and $0$'s anywhere else. Finally, in a game $G$ and for a player $i$ with pure strategy set $S^i = [m_i]$, we identify any pure strategy $k \in [m_i]$ with its corresponding mixed strategy vector in $\Delta(S^i)$ which is exactly the basis vector $e_k \in \R^{m_i}$.

\begin{prop*}
\label{app:two conclusions of strat equiv preserving}
Let $H = \{H^i\}_{i \in [N]}$ be a separable game transformation that universally preserves \NE{} sets and consider the map $H^i$ of a player~$i$. Then $H^i$ only depends on the strategy choices of the opponents. Moreover, $H^i$ universally preserves best responses.
\end{prop*}
\begin{proof} 
Take a separable game transformation $H = \{H^i\}_{i \in [N]}$ that universally preserves \NE{} sets and fix some player~$i$. 
\paragraph{1} Fix a pure strategy choice $\ks_{-i} \in S^{-i}$ of the opponent players and take some arbitrary value $z \in \R$. Consider the game $G = \{u_j\}_{j \in [N]}$ with constant utility functions $u_j \equiv z$ for all $j \in [N]$. Then, the \NE{} set will be the whole strategy space $\Delta(S)$. By assumption on $H$, the transformed game $H(G)$ also has the full strategy space as its set of \NEs{}. In particular, each of the strategy profiles $(1 , \ks_{-i}), \ldots,  (m_i , \ks_{-i})$ will be a \NE{} of the transformed game $H(G)$. Hence, for all $2 \leq l \leq m_i$:

\begin{align*}
h&_{1, k_{-i}}^i(z) \overset{u_i \equiv z}{=} h_{1, \ks_{-i}}^i \Big( u_i(1, \ks_{-i}) \Big) \overset{
\textnormal{Def } \ref{def game trafo}}{=} H^i(u_i)(1, \ks_{-i}) \\
\overset{\text{Nash-Eq}}&{=} \max_{t^i \in \Delta(S^i)} \Big\{ \, H^i(u_i)(t^i, \ks_{-i}) \Big\} \overset{\text{Nash-Eq}}{=} H^i(u_i)(l,\ks_{-i}) \\
&= h_{l, \ks_{-i}}^i \Big( u_i(l, \ks_{-i}) \Big) = h_{l, \ks_{-i}}^i(z) \, .
\end{align*}
Since $z$ and $l$ were chosen arbitrarily, we get 
\[h_{1, \ks_{-i}}^i = \ldots = h_{m_i, \ks_{-i}}^i \, . \]
\paragraph{2} Fix player~$i$'s utility function $u_i$ and the opponents' strategy choices $\strats^{-i} \in \Delta(S^{-i})$. Then by \ref{app:BR equal through PBR}, it suffices to identify the pure strategies in the best-response sets $\BR_{u_i}(\strats^{-i})$ and $\BR_{H^i(u_i)}(\strats^{-i})$.

Complete the prefixed $u_i$ to a full game $G = \{u_j\}_{j \in [N]}$ by setting $u_j \equiv 0$ for the other players $j \neq i$. Then, the best-response set of a player~$j \neq i$ is her whole strategy space $\Delta(S^j)$. By assumption on the game transformation~$H$, we get for a pure strategy $e_l = l \in S^i$:

\begin{align*}
&e_l \in \BR_{u_i}(\strats^{-i}) \\
&\iff (e_l, \strats^{-i}) \text{ is a \NE{} for the game } G \\
&\iff (e_l, \strats^{-i}) \text{ is a \NE{} for the game } H(G) \\
\overset{\text{def}}&{\iff} e_l \in \BR_{H^i(u_i)}(\strats^{-i}) \text{ and } \forall j \neq i \, :\\
&\, \quad \quad \quad s^j \in \BR_{H^j(u_j)}(s^1, \ldots, s^{j-1}, s^{j+1}, \ldots, \\
&\, \quad \quad \quad \quad \quad \quad \quad \quad \quad \quad \quad \quad s^{i-1}, e_l, s^{i+1}, \ldots, s^N) \\
\overset{(*)}&{\iff} e_l \in \BR_{H^i(u_i)}(\strats^{-i})
\end{align*}
Let us give some further explanation for step $(*)$. Recall the definition for a strategy $s^j$, $j \neq i$, to be a best response to the opponents' strategy choices $(s^1, \ldots, s^{j-1}, s^{j+1}, \ldots, s^{i-1}, s^i := e_l, s^{i+1}, \ldots s^N)$:
\begin{align*}
    s^j &\in \argmax_{t^j \in \Delta(S^j)} \Big\{ \sum_{\ks \in S} s_{k_1}^1 \cdot \ldots \cdot s_{k_{i-1}}^{i-1} \\
    &\quad \quad \quad \quad \quad \quad \quad \, \, \cdot t_{k_i}^i \cdot s_{k_{i+1}}^{i+1} \cdot \ldots \cdot s_{k_N}^N \cdot h_{\ks}^j \big( u_j(\ks) \big) \, \Big\} \, .
\end{align*}
We can show that the term in the argmax is constant in $t^j$. First, note that the maps $h_{\ks}^j$ are independent of player $j$'s action, which, in particular, implies $h_{\ks}^j = h_{1, \ks_{-j}}^j$. Then, rearranging yields
\begin{align*}
&\sum_{\ks \in S} s_{k_1}^1 \cdot \ldots \cdot s_{k_{i-1}}^{i-1} \cdot t_{k_i}^i \cdot s_{k_{i+1}}^{i+1} \cdot \ldots \cdot s_{k_N}^N \cdot h_{\ks}^j \big( u_j(\ks) \big) \\
\overset{u_j \equiv 0}&{=} \sum_{\ks_{-j}} \Bigg( s_{k_1}^1 \cdot \ldots \cdot s_{k_{j-1}}^{j-1} \cdot s_{k_{j+1}}^{j+1} \cdot \ldots \cdot s_{k_N}^N \\
    &\quad \quad \quad \quad \quad \quad \quad \quad \quad \quad \quad \quad \quad \cdot h_{1, \ks_{-j}}^j ( 0 ) \cdot \sum_{k_j=1}^{m_j} t_{k_j}^j \, \Bigg) \\
\overset{(\dagger)}&{=} \sum_{\ks_{-j}} s_{k_1}^1 \cdot \ldots \cdot s_{k_{j-1}}^{j-1} \cdot s_{k_{j+1}}^{j+1} \cdot \ldots \cdot s_{k_N}^N \cdot h_{1, \ks_{-j}}^j ( 0 ) \, .
\end{align*}
Since the term in the argmax is constant in $t^j$, any strategy of player~$j$ is a best response to $(s^1, \ldots, s^{j-1}, s^{j+1}, \ldots, s^{i-1}, e_l, s^{i+1}, \ldots s^N)$. Therefore, we obtain the equivalence $(*)$ by removing/adding the redundant condition on each $s^j$, $j \neq i$, to be a best response.

All in all, we proved that the sets $\BR_{u_i}(\strats^{-i})$ and $\BR_{H^i(u_i)}(\strats^{-i})$ contain the same pure strategies. Corollary~\ref{app:BR equal through PBR} therefore yields set equality.
\end{proof}

\begin{lemma*}
\label{app:br implies opp dependent}
Suppose a map $H^i$ universally preserves best responses. Then $H^i$ only depends on the strategy choices of the opponents.
\end{lemma*}
\begin{proof} 
Let the pure strategy choice of the opponents be $\ks_{-i} \in S^{-i}$. Pick some $z \in \R$ and set $u_i \equiv z$. Then we can reformulate
\begin{align*}
&h_{1, \ks_{-i}}^i(z) = \ldots = h_{m_i, \ks_{-i}}^i(z) \\
&\iff \forall l \in [m_i]: \, h_{l, \ks_{-i}}^i(z) = \max_{p \in [m_i]} h_{p, \ks_{-i} }^i(z) \\
\overset{u_i \equiv z}&{\iff} \forall l \in [m_i]: \\
&\quad \quad \quad \quad h_{l, \ks_{-i}}^i(u_i(l, \ks_{-i})) = \max_{p \in [m_i]} h_{p, \ks_{-i}}^i(u_i(p, \ks_{-i})) \\
&\iff \forall l \in [m_i]: \, H^i(u_i)(l, \ks_{-i}) = \max_{p \in [m_i]} H^i(u_i)(p, \ks_{-i}) \\
&\iff \forall l \in [m_i]: \, e_l = l \in \BR_{H^i(u_i)}(s^{-i} = \ks_{-i}) \\
\overset{(*)}&{\iff} \forall l \in [m_i]: \, e_l = l \in \BR_{u_i}(s^{-i} = \ks_{-i}) \\
&\iff \forall l \in [m_i]: \, u_i(l, \ks_{-i}) = \max_{p \in [m_i]} u_i(p \ks_{-i}) \\
\overset{u_i \equiv z}&{\iff} \forall l \in [m_i]: \, z = \max_{p \in [m_i]} z \, .
\end{align*}
In $(*)$, we use that $H^i$ is universally best-response preserving.

With the last line of the equivalence chain above being a universally true statement, we obtain that the first line also holds true. Since $z$ was chosen arbitrarily, we can conclude $h_{1, k_{-i}}^i = \ldots = h_{m_i, k_{-i}}^i$.
\end{proof}

\begin{rem*}
A distance distortion function $\Delta h_{\ks_{-1}}^1$, as defined in (\ref{def dist dist fct}), is skew-symmetric:
\begin{align}
\label{app:skew-symm}
    \forall \, z,w \in \R \, : \quad \Delta h_{\ks_{-1}}^1(z,w) = \, -\Delta h_{\ks_{-1}}^1(w,z) \, .
\end{align}
\end{rem*}

The following lemma reveals an important preliminary observation on how the distance distortion functions $\Delta h_{\ks_{-1}}^1$ relate to each other. It highlights how the distorted utility distances are affected by a strategy change of a player $j \neq 1$ from, e.g., some pure strategy $k_j \in [m_j] \setminus \{1\}$ to their pure strategy $1 \in [m_j]$.

We formulate the lemma with index variables $\ps_{-1} = (p_2, \ldots, p_N)$ instead of $\ks_{-1} = (k_2,\ldots,k_N)$ in order to avoid confusion in the proof of the subsequent Lemma~\ref{same distance distortion functions}.
\begin{lemma}
\label{app:relating distance maps}
Suppose transformation map $H^1$ universally preserves best responses. Take a player $j \in [N] \setminus \{1\}$ and profile $\ps_{-1} \in S^{-1}$ with $p_j \neq 1$. Define $\ps_{-1}' \in S^{-1}$ to be the same as $\ps_{-1}$ except for player~$j$'s choice which shall be set to $p_j' = 1$. Then, for all $z,z',w,w' \in \R$:
\begin{align*}
z-w \geq z'-w' \quad \iff \quad \Delta h_{\ps_{-1}}^1(z,w) \geq \Delta h_{\ps_{-1}'}^1(z',w') \, .
\end{align*}
\end{lemma}

\begin{proof} 
Take a transformation map $H^1$ that universally preserves the best-response sets. Then by Lemma~\ref{br implies opp dependent}, its maps $h_{\ks}^1$ only depend on the strategy choices $\ks_{-1}$ of the opponents. Fix $j, \ps_{-1}, \ps_{-1}'$ and $z,z',w,w'$ as described in the lemma statement. We will construct a utility function $u_1$ for these parameters such that a universally best-response preserving $H^1$ reveals to satisfies the property of this lemma.

Set $u_1(1, \ps_{-1}) := z$ and $u_1(1, \ps_{-1}') := w'$. Additionally, for all pure strategies $l \in [m_1] \setminus \{1\}$, set $u_1(l, \ps_{-1}) := w$ and $u_1(l, \ps_{-1}') := z'$. All these utility value assignments are possible because of $p_j \neq 1 = p_j'$. The utility payoffs of PL1 (i.e., the values of $u_1$) from other pure strategy outcomes $\ks \in S$ can be set arbitrarily. Finally, consider the opponents' mixed strategy profile $\strats^{-1} := \frac{1}{2} \ps_{-1} + \frac{1}{2} \ps_{-1}' \in \Delta(S^{-1})$. 
Then we derive:
\begin{flalign*}
&z-w \geq z'-w'&&
\\
&\iff \forall l \in [m_1] \setminus \{1\} \,: &&
\\
&\quad \quad \quad u_1(1, \ps_{-1}) - u_1(l, \ps_{-1}) \geq u_1(l, \ps_{-1}') - u_1(1, \ps_{-1}')&&
\end{flalign*}
\textit{Reorder and divide by $2$}
\begin{flalign*}
&\iff  \forall l \in [m_1] \setminus \{1\} \,: && \\
&\quad \quad \quad \frac{1}{2}u_1(1, \ps_{-1}) + \frac{1}{2} u_1(1, \ps_{-1}') && \\
&\quad \quad \quad \quad \quad \geq \frac{1}{2} u_1(l, \ps_{-1}) + \frac{1}{2} u_1(l, \ps_{-1}')&&
\\
&\iff  \forall l \in [m_1] \setminus \{1\} \,: \, u_1(e_1, \strats^{-1}) \geq u_1(e_l, \strats^{-1})&&
\\
&\iff  e_1 \in \BR_{u_1}(\strats^{-1})&&
\end{flalign*}
\textit{$H^1$ is universally preserves best responses}
\begin{flalign*}
&\iff e_1 \in \BR_{H^1(u_1)}(\strats^{-1})&&
\\         %\displaybreak
&\iff  \forall l \in [m_1] \setminus \{1\} \,: && \\
&\quad \quad \quad  H^1(u_1)(e_1, \strats^{-1}) \geq H^1(u_1)(e_l, \strats^{-1})&&
\\
&\iff \forall l \in [m_1] \setminus \{1\} \,: && \\
&\quad \quad \quad \frac{1}{2} h_{1, \ps_{-1}}^1(u_1(1, \ps_{-1}))  + \frac{1}{2} h_{1, \ps_{-1}'}^1(u_1(1, \ps_{-1}'))&&
\\
&\quad \quad \quad \quad \geq \frac{1}{2} h_{l, \ps_{-1}}^1(u_1(l, \ps_{-1})) + \frac{1}{2} h_{l, \ps_{-1}'}^1(u_1(l, \ps_{-1}'))&&
\\
&\iff \forall l \in [m_1] \setminus \{1\} \,: && \\
&\quad \quad \quad h_{1, \ps_{-1}}^1(z)  + h_{1, \ps_{-1}'}^1(w') \geq h_{l, \ps_{-1}}^1(w) + h_{l, \ps_{-1}'}^1(z')&&
\end{flalign*}
\textit{$H^1$ does not depend on the pure strategy choice of player 1}
\begin{flalign*}
&\iff h_{\ps_{-1}}^1(z) + h_{\ps_{-1}'}^1(w') \geq h_{\ps_{-1}}^1(w) + h_{\ps_{-1}'}^1(z')&&
\\
&\iff h_{\ps_{-1}}^1(z) - h_{\ps_{-1}}^1(w) \geq h_{\ps_{-1}'}^1(z') - h_{\ps_{-1}'}^1(w')&&
\\
&\iff \Delta h_{\ps_{-1}}^1(z,w) \geq \Delta h_{\ps_{-1}'}^1(z',w')&&
\end{flalign*}
\end{proof}

\begin{lemma*}
\label{app:same distance distortion functions}
Suppose transformation $H^1$ universally preserves best responses. Then the pure-strategy maps in $H^1$ equally distort distances:
\[\forall \ks_{-1} \in S^{-1} \, : \, \Delta h_{\ks_{-1}}^1 = \Delta h_{\1_{-1}}^1 \]
where $\1_{-1} := (1, \ldots, 1) \in S^{-1}$.
\end{lemma*}

\begin{proof} 
Take a transformation map $H^1$ that universally preserves the best-response sets. Then by Lemma~\ref{br implies opp dependent}, its maps $h_{\ks}^1$ only depend on the strategy choices $\ks_{-1}$ of the opponents. Fix $\ks_{-1} \in S^{-1}$. Recall that the elements $j \geq 2$ and $\ps \in S^{-1}$ in \ref{app:relating distance maps} can be chosen arbitrarily\footnote{We required $p_j \neq 1$, but this is irrelevant for the argument we are making here.}. So we can apply \ref{app:relating distance maps} on a trivially true statement to get for all $z,w \in \R$:
\begin{align*}
&z-w \geq z-w \\
&\implies \forall j \in [N] \setminus \{1\}: \\
&\quad \quad \quad \Delta h_{k_2, \ldots, k_{j-1}, k_j, 1, \ldots, 1}^1(z,w) \\
&\quad \quad \quad \quad \geq \Delta h_{k_2, \ldots, k_{j-1}, 1, 1, \ldots, 1}^1(z,w) \\
&\implies \Delta h_{k_2, \ldots, k_{N-1}, k_N}^1(z,w) \geq \Delta h_{k_2, \ldots, k_{N-1}, 1}^1(z,w) \\
&\quad \quad \quad \quad \geq \ldots \geq \Delta h_{1, \ldots, 1}^1(z,w) \, . 
\end{align*}
With skew-symmetry, we similarly obtain
\begin{align*}
&w-z \geq w-z \\
&\implies \forall j \in [N] \setminus \{1\}: \\
&\quad \quad \quad \Delta h_{k_2, \ldots, k_{j-1}, k_j, 1, \ldots, 1}^1(w,z) \\
&\quad \quad \quad \quad \geq \Delta h_{k_2, \ldots, k_{j-1}, 1, 1, \ldots, 1}^1(w,z) \\
&\implies \Delta h_{k_2, \ldots, k_{N-1}, k_N}^1(w,z) \geq \Delta h_{k_2, \ldots, k_{N-1}, 1}^1(w,z) \\
&\quad \quad \quad \quad \geq \ldots \geq \Delta h_{1, \ldots, 1}^1(w,z) \\
\overset{\cdot \, (-1)}&{\implies} \Delta h_{k_2, \ldots, k_{N-1}, k_N}^1(z,w) \leq \Delta h_{1, \ldots, 1}^1(z,w) \, . 
\end{align*}
Putting both together, we have for all $z,w \in \R$:
\begin{align*}
\Delta h_{\ks_{-1}}^1(z,w) &= \Delta h_{k_2, \ldots, k_{N-1}, k_N}^1(z,w) 
\\
&= \Delta h_{1, \ldots, 1}^1(z,w) = \Delta h_{\1_{-1}}^1(z,w) \, .
\end{align*}

\end{proof}

\begin{lemma*}
\label{app:strat specific maps behaviour}
Suppose transformation $H^1$ universally preserves best responses. Then we obtain for all $\ks_{-1} \in S^{-1}$ that
\begin{enumerate}
\item map $h_{\ks_{-1}}^1$ is strictly increasing, and that
\item map $h_{\ks_{-1}}^1$ distorts distances independently of their reference points:
\begin{align}
\label{app:distance distortion ind of point}
\forall z,z',\lambda \in \R \, : \, \Delta h_{\ks_{-1}}^1(z + \lambda, z) = \Delta h_{\ks_{-1}}^1(z' + \lambda, z') \, .
\end{align}
\end{enumerate}
\end{lemma*}

\begin{proof}
Take a transformation map $H^1$ that universally preserves the best-response sets. Then by Lemma~\ref{br implies opp dependent}, its maps $h_{\ks}^1$ only depend on the strategy choices $\ks_{-1}$ of the opponents.

\paragraph{1} 
Let us first consider $h_{2,1, \ldots, 1}^1$ that is associated to the pure strategy profile $(2,1, \ldots, 1) \in S^{-1}$. Apply \ref{app:relating distance maps} in the following line $(*)$ with parameters $j = 2$, $\ps_{-1} = (2,1, \ldots, 1)$, and $z' = w' \in \R$ to get for arbitrary $z,w \in \R$:
\begin{align*}
z \geq w &\iff z-w \geq 0 = z' - w' \\
\overset{(*)}&{\iff} \Delta h_{2,1, \ldots, 1}^1(z,w) \geq \Delta h_{\1_{-1}}^1(z',w') \overset{z' = w'}{=} 0 \\
&\iff h_{2,1, \ldots, 1}^1(z) \geq h_{2,1, \ldots, 1}^1(w) \, .
\end{align*}
Consequently, we have for arbitrary $\bar{z},\bar{w} \in \R$: 
\begin{align*}
\bar{z} > \bar{w} &\iff \bar{z} \geq \bar{w} \text{ and } \bar{w} \ngeq \bar{z} \\
\overset{\text{by above}}&{\iff} h_{2,1, \ldots, 1}^1(\bar{z}) \geq h_{2,1, \ldots, 1}^1(\bar{w})\\
&\, \quad \quad \, \, \, \text{ and }  h_{2,1, \ldots, 1}^1(\bar{w}) \ngeq h_{2,1, \ldots, 1}^1(\bar{z}) \\
&\iff h_{2,1, \ldots, 1}^1(\bar{z}) > h_{2,1, \ldots, 1}^1(\bar{w}) \, .
\end{align*}
This shows that $h_{2,1, \ldots, 1}^1$ is strictly increasing.
\\
For arbitrary $\ks_{-1} \in S^{-1}$, we can then use Lemma~\ref{same distance distortion functions} to obtain
\begin{align*}
\bar{z} > \bar{w} &\iff h_{2,1, \ldots, 1}^1(\bar{z}) > h_{2,1, \ldots, 1}^1(\bar{w}) \\
&\iff \Delta h_{2,1, \ldots, 1}^1(\bar{z}, \bar{w}) > 0 \\
&\iff \Delta h_{\ks_{-1}}^1(\bar{z}, \bar{w}) = \Delta h_{\1_{-1}}^1(\bar{z}, \bar{w}) \\
&\quad \quad \quad \, = \Delta h_{2,1, \ldots, 1}^1(\bar{z}, \bar{w}) > 0 \\
&\iff h_{\ks_{-1}}^1(\bar{z}) > h_{\ks_{-1}}^1(\bar{w}) \, .
\end{align*}
Thus, $h_{\ks_{-1}}^1$ is strictly increasing as well. 

\paragraph{2} 
Because of Lemma~\ref{same distance distortion functions}, we only need to show that the map $\Delta h_{\1_{-1}}^1$ satisfies property (\ref{app:distance distortion ind of point}), which would consequently imply the property for all maps $\Delta h_{\ks_{-1}}^1$.

Fix $z,z',\lambda \in \R$. Then the following equivalence chain uses skew-symmetry (\ref{app:skew-symm}) in $(*)$, Lemma~\ref{same distance distortion functions} in $(\dagger)$, and \ref{app:relating distance maps} in $(\star)$ for parameters $j = 2$ and $\ps_{-1} = (2,1, \ldots, 1)$:
\begin{align*}
&\Delta h_{\1_{-1}}^1(z + \lambda, z) = \Delta h_{\1_{-1}}^1(z' + \lambda, z') 
\\
\overset{(*)}&{\iff} \Delta h_{\1_{-1}}^1(z + \lambda, z) \geq \Delta h_{\1_{-1}}^1(z' + \lambda, z') \\
&\, \quad \quad \quad \text{and} \, \, \Delta h_{\1_{-1}}^1(z, z + \lambda)\geq \Delta h_{\1_{-1}}^1(z', z' + \lambda) 
\\
\overset{(\dagger)}&{\iff} \Delta h_{2,\ldots, 1}^1(z + \lambda, z) \geq \Delta h_{\1_{-1}}^1(z' + \lambda, z') \\
&\, \quad \quad \quad \text{and} \, \, \Delta h_{2,\ldots, 1}^1(z, z + \lambda)\geq \Delta h_{\1_{-1}}^1(z', z' + \lambda) 
\\
\overset{(\star)}&{\iff} z + \lambda - z \geq z' + \lambda - z' \\
&\, \quad \quad \, \, \, \, \text{and}  \, \, z - (z + \lambda) \geq z' - (z' + \lambda) \, .
\end{align*}
The last line is a true statement and thus, the first line as well. Because $z,z',\lambda \in \R$ were taken arbitrarily, map $h_{\1_{-1}}^1$ satisfies property (\ref{app:distance distortion ind of point}).
\end{proof}

\begin{thm*}
Let $H = \{H^i\}_{i \in [N]}$ be a separable game transformation. Then:
\begin{align}
\label{app:mt1} \tag{i}
&\, H \text{ universally preserves \NE{} sets} \\
\label{app:mt2} \tag{ii}
&\iff \text{for each player~$i$, map $H^i$ universally} 
\\
\nonumber
&\, \quad \quad \quad \text{preserves best responses} \\
\label{app:mt3} \tag{iii}
&\iff H \text{ is a positive affine transformation}.
\end{align}
\end{thm*}

\begin{proof} 
See main body.
\end{proof}

\section{Helping Lemmas}
\label{sec:helpinglemmas}

This appendix section does not contain original ideas and is just included for completeness.

Denote the restriction of a best-response set to its pure strategies as $\PBR_{u_i}(\strats^{-i}) :=  \BR_{u_i}(\strats^{-i}) \cap \{e_1, \ldots, e_{m_i}\}$. Then, we have that best responses are always convex combinations of pure best responses:
\begin{lemma*}
\label{app:BR charact}
Take a game $G = \{u_i\}_{i \in [N]}$, fix a player~$i \in [N]$ and a strategy profile $\strats^{-i} \in \Delta(S^{-i})$ of the opponents. Then, we have for $t^i \in \Delta(S^i)$:
\begin{align}
\begin{aligned}
    &t^i \in \BR_{u_i}(\strats^{-i}) \\
    &\iff \forall k \in [m_i] : \,  t_k^i = 0 \, \, \, \textup{or} \, \, \, e_k \in \PBR_{u_i}(\strats^{-i})  \, .
\end{aligned}
\end{align}
\end{lemma*}
\begin{proof}
We can observe
\begin{align}
\label{convex payoff combi}
\begin{aligned}
    &u_i(t^i, \strats^{-i}) \\
    &= \sum_{\ks \in S} s_{k_1}^1 \cdot \ldots \cdot s_{k_{i-1}}^{i-1} \cdot t_{k_i}^i \cdot s_{k_{i+1}}^{i+1} \cdot \ldots \cdot s_{k_N}^N \cdot u_i(\ks) 
    \\
    &= \sum_{k_i=1}^{m_i} t_{k_i}^i \cdot \sum_{\ks_{-i} \in S^{-i}} s_{k_1}^1 \cdot \ldots \cdot s_{k_{i-1}}^{i-1} \cdot s_{k_{i+1}}^{i+1} \cdot \ldots \\
    &\quad \quad \quad \quad \quad \quad \quad \quad \quad \quad \quad \quad \quad \quad \quad \quad \quad \cdot s_{k_N}^N \cdot u_i(\ks) 
    \\
    &= \sum_{k_i=1}^{m_i} t_{k_i}^i \cdot u_i(e_{k_i}, \strats^{-i}) = \sum_{k=1}^{m_i} t_{k}^i \cdot u_i(e_k, \strats^{-i}) \, .
\end{aligned}
\end{align}
Thus, the mixed strategy $t^i$ of player $i$ only determines the convex combination of the attainable utility values $\Big( u_i(e_k, \strats^{-i}) \Big)_k$. Therefore, any best-response strategy $t^i$ must only randomize over maximal values within $\Big( u_i(e_k, \strats^{-i}) \Big)_k$, that is, over pure best-response strategies.
\end{proof}

\begin{cor}
\label{app:BR equal through PBR}
Two best-response sets (of possibly different games) are equal if and only if they contain the same pure best responses.
\end{cor}

\begin{lemma*}
\label{app:multiplayer PAT preserves}
Take a PAT $H_{\textnormal{PAT}} = \big\{ \alpha^i, C^i \big\}_{i \in [N]}$ and any game $G = \{u_i\}_{i \in [N]}$. Then, the transformed game $H_{\textnormal{PAT}}(G) = \{u_i'\}_{i \in [N]}$ has the same best-response sets as $G$. Consequently, $H_{\textnormal{PAT}}(G)$ also has the same \NE{} set as $G$.
\end{lemma*}
\begin{proof}
The proof is an appropriate generalization of the known proof for Lemma~\ref{PAT preserves lemma}.

Take a game $\{u_i\}_{i \in [N]}$, fix a player~$i$ and the opponents' strategy choices $\strats^{-i}$. Then, we have
\begin{flalign*}
    &\BR_{u_i'}(\strats^{-i}) = \argmax_{t^i \in \Delta(S^i)} \Big\{ \, u_i'(t^i,\strats^{-i}) \, \Big\} \\
    &= \argmax_{t^i \in \Delta(S^i)} \Big\{ \\
    &\quad \quad \sum_{\ks \in S} s_{k_1}^1 \cdot \ldots \cdot s_{k_{i-1}}^{i-1} \cdot t_{k_i}^i \cdot s_{k_{i+1}}^{i+1} \cdot \ldots \cdot s_{k_N}^N \cdot u_i'(\ks) \, \Big\} \\
    \overset{(\ref{PAT transformed utilities})}&=\argmax_{t^i \in \Delta(S^i)} \Big\{ \sum_{\ks \in S} s_{k_1}^1 \cdot \ldots \cdot s_{k_{i-1}}^{i-1} \cdot t_{k_i}^i \cdot s_{k_{i+1}}^{i+1} \cdot \ldots  \\
    &\quad \quad \quad \quad \quad \quad \quad \quad \quad \quad \quad \quad \cdot s_{k_N}^N \cdot  \big( \alpha^i \cdot u_i(\ks) +  c_{\ks_{-i}}^i \big) \Big\} \\
    \overset{(*)}&{=} \argmax_{t^i \in \Delta(S^i)} \Big\{ \\
    &\quad \quad \alpha^i \cdot \sum_{\ks \in S} s_{k_1}^1 \cdot \ldots \cdot s_{k_{i-1}}^{i-1} \cdot t_{k_i}^i \cdot s_{k_{i+1}}^{i+1} \cdot \ldots \cdot s_{k_N}^N \cdot u_i(\ks) \\
    &\quad \quad + \sum_{\ks_{-i} \in S^{-i}} s_{k_1}^1 \cdot \ldots \cdot s_{k_{i-1}}^{i-1} \cdot s_{k_{i+1}}^{i+1} \cdot \ldots \cdot s_{k_N}^N \cdot c_{\ks_{-i}}^i  \cdot 1\Big\} \\
\end{flalign*}
\begin{flalign*}
    \overset{(\dagger)}&{=} \argmax_{t^i \in \Delta(S^i)} \Big\{ \\
    &\quad \quad \sum_{\ks \in S} s_{k_1}^1 \cdot \ldots \cdot s_{k_{i-1}}^{i-1} \cdot t_{k_i}^i \cdot s_{k_{i+1}}^{i+1} \cdot \ldots \cdot s_{k_N}^N \cdot u_i(\ks) \Big\} \\
    &= \argmax_{t^i \in \Delta(S^i)} \Big\{ \, u_i(t^i,\strats^{-i}) \, \Big\} = \BR_{u_i}(\strats^{-i}) 
\end{flalign*}
We obtain the second summand in $(*)$ by changing the order of summation and multiplication such that $\sum_{k_i = 1}^{m_i} t_i$ remains as the most inner sum. Since $\sum_{k_i = 1}^{m_i} t_i = 1$, this factor can be dropped. We get line $(\dagger)$ because the argmax operator is neither affected by a constant in $t_i$ (such as the secoond summand) nor by rescaling with a positive factor (such as $\alpha_i$).

Finally, the definition of a \NE{} immediately implies that strategy profile $s$ is a \NE{} for the PAT transformed game $\{u_i'\}_{i \in [N]}$ if and only if it was one for the original game $\{u_i\}_{i \in [N]}$.
\end{proof}

\section{Monotone and additive implies linear}
\label{sec:analysislemma}
The proof of the following lemma is taken from \citeauthor{monadd1}~[\citeyear{monadd1,monadd2}] and just included for completeness.
\begin{lemma*}
Take a map $h: \R \rightarrow \R$ which is monotone and additive. Then:
\begin{enumerate}

\item $ h(0) = 0 $ .
\item \label{hihi} $ \forall x \in \R \, : \quad -h(-x) = h(x) $ .
\item $ \forall n \in \N, x \in \R \, : \quad h(n \cdot x) = n \cdot h(x) $ .
\item $ \forall p \in \Z, x \in \R \, : \quad h(p \cdot x) = p \cdot h(x) $ .
\item $ \forall r \in \Q, x \in \R \, : \quad h(r \cdot x) = r \cdot h(x) $ .
\item $ \forall x \in \R \, : \quad h(x) = x \cdot h(1) $ .
\end{enumerate}
In particular, the last conclusion yields that $h$ is linear.
\end{lemma*}
\begin{proof}
The first three conclusions follow from $h$ being additive. 

\paragraph{1} 
\[ h(0) = h(0) + h(x) - h(x) = h(0 + x) - h(x) = 0 \, .\]

\paragraph{2} 
\begin{align*}
    \forall x \in \R \, : \quad -h(-x) &= - \Big( h(-x) + h(x) \Big) + h(x) 
    \\
    &= - h( -x + x ) + h(x) \\
    &= - h(0) + h(x) \\
    &= h(x) \, .
\end{align*}

\paragraph{3} 
Proof by induction. The induction start $n=1$ is clear, so assume it to be true for $n \in \N$. 
\\
Then, for all $x \in \R$:
\begin{align*}
h\Big( (n+1) \cdot x \Big) &= h(n\cdot x + x) = h(n \cdot x) + h(x) \\
&= n \cdot h(x) + h(x) = (n+1)\cdot h(x) \, . 
\end{align*}

\paragraph{4} 
The statement for the case $p \in \Z \, \cap \, \{ z \geq 0 \}$ follows from the first and third conclusion. If $p \in  \Z \, \cap \, \{ z < 0 \}$, we can use the second and third conclusion to obtain for all $x \in \R$:
\begin{align*}
    h(p \cdot x) &= h\Big( (-p) \cdot (-x) \Big) = (-p) \cdot h(-x) 
    \\
    &= (-p) \cdot \Big( -h(x) \Big) = p \cdot h(x) \, .
\end{align*}

\paragraph{5} 
Write $r = \frac{p}{q}$ where $p \in \Z, q \in \N$. Then, by the fourth conclusion:
\begin{align*}
    h(r \cdot x) &= \frac{1}{q} \cdot q \cdot h\Big( \frac{p}{q} \cdot x \big) = \frac{1}{q} h\Big( q \cdot \frac{p}{q} \cdot x \big) = \frac{1}{q} h(p \cdot x ) \\
    &=\frac{1}{q} \cdot p \cdot h(x) = r \cdot h(x) \, . 
\end{align*}

\paragraph{6} 
Suppose $x \in \Q$. Then, the fifth conclusion yields
\[ h(x) = h(x \cdot 1) = x \cdot h(1) \, .\]
Therefore, suppose $x \in \R \setminus \Q$.

Since $\Q$ is dense in $\R$, we can take an increasing sequence $(r_n)_{n \in \N} \subset \Q$ that converges to $x$ (from below) and a decreasing sequence $(s_n)_{n \in \N} \subset \Q$ that converges to $x$ (from above). In the case where $h$ is an increasing function, we have for all $n \in \N$:
\begin{align*}
    r_n \leq x \leq s_n &\implies h(r_n) \leq h(x) \leq h(s_n) 
    \\
    &\implies r_n \cdot h(1) \leq h(x) \leq s_n \cdot h(1) \, .
\end{align*}
Taking the limit $n \rightarrow \infty$ in the last inequality chain yields
\[ x \cdot h(1) \leq h(x) \leq x \cdot h(1) \, .\]
If $h$ is a decreasing function instead of an increasing one, we get the same implications but with reverse inequalities in the second and last inequality chains. The end result, however, will be the same. Putting everything together yields the sixth conclusion.

\end{proof}

\begin{cor*}
\label{app:affine linearity corollary}
Let $h: \R \rightarrow \R$ be monotone and satisfy for all $z,z',\lambda \in \R$:
\begin{equation}
\label{distance preserving}
h(z+\lambda) - h(z) = h(z' + \lambda) - h(z') \, .
\end{equation} 
Then h is affine linear, i.e., there exist some $a,c \in \R$ such that for all $z \in \R\, : \, h(z) = az + c$.
\end{cor*}
\begin{proof} 
Define $h'(z):=h(z) - h(0)$, which is still a monotone function. By our assumption on $h$, we have for all $x,y \in \R$:
\begin{align*}
h'(x+y) &= h(x+y)-h(0) 
\\
&= h(x+y) - h(y) + h(y) - h(0) \\
&= h(x) - h(0) + h(y) - h(0) 
\\
&= h'(x) + h'(y) \, .
\end{align*}
Therefore, we can apply Lemma~\ref{helping analysis lemma} to $h'$ to get $a \in \R$ such that for all $z \in \R$ 
\begin{align*}
    h(z) &= h(z) - h(0) + h(0) = h'(z) + h(0) \\
    &= az + h(0) =: az + c \, .
\end{align*}
\end{proof}

\section{Utility Theory in Game Theory}
\label{sec:utility_theory}

This section revises some related utility theory and is just included for completeness. A proper treatment can be found in e.g. \citet{MasCol}.

\paragraph{Preferences and Utility Functions} Suppose a decision maker can choose one outcome from a space $C$ of $N$-many outcomes (where $N$ finite). Moreover, the decision maker prefers some outcomes over others which is captured by her preference relation $\succeq$ on $C$.

We typically describe the preferences of the decision maker through utility functions:
\begin{defn}
\label{utility repr defn}
A utility function $u : C \rightarrow \R$ is said to represent a preference relation $\succeq$ if for all $c,d \in C$, we have $c \succeq d \iff u(c) \geq u(d)$.
\end{defn}
Multiple utility functions can represent the same preference relation. Their practical use is that they translate the preference relation $\succeq$ into comparisons of numerical values.

On the other hand, starting with a utility function $u$ yields an induced preference relation $\succeq$ through
\[ \forall \, c,d \in C \quad :  \quad c \succeq d :\iff u(c) \geq u(d) \, . \] 

\paragraph{Lotteries and the Expected Utility}

Now suppose we want to allow the decision maker to choose each outcome in $C$ with some probability. Call such a probability distribution $L = (p_1,\ldots,p_N)$ over $C$ a lottery. The $i$-th outcome in C can then be represented by the lottery $e_i \in \R^n$. Thus, we extended the choice space of the decision maker from $C$ to the space $\mathcal{L}$ of lotteries. We can also extend Definition~\ref{utility repr defn} to preference relations $\succeq$ over $\mathcal{L}$ by requiring $u : \mathcal{L} \rightarrow \R$ and $\forall \, L,M \in \mathcal{L} \, : \, L \succeq M \iff u(L) \geq u(M)$.

We will be especially interested in those utility functions that simply compute the expected utility of randomly choosing an outcome according to $L$.
\begin{defn*}
    A \textit{von Neumann-Morgenstern (NM)} expected utility function is a map $U : \mathcal{L} \rightarrow \R$ that is determined by its values $U(e_i)$ on the outcomes $e_i \in C, i \in [N],$ and by 
    \[\forall L = (p_1,\ldots,p_N) \,: \quad U(L) = \sum_{i = 1}^N p_i \cdot U(e_i) \, .\]
\end{defn*}

The following theorem describes the preference relations that can be represented by a NM expected utility function. The theorem relies on four properties - called \textit{axioms} - that a preference relation $\succeq$ can satisfy: Completeness\footnote{For all $L,M \in \mathcal{L}$, we have $L \succeq M$ or $L \preceq M$ (or both, in which case we write $L \sim M$).}, Transitivity\footnote{For all $L,M,N \in \mathcal{L}$, if $L \succeq M$ and $M \succeq N$, then $L \succeq N$.}, Continuity\footnote{For all $L,M,N \in \mathcal{L}$ with $L \succeq M \succeq N$, there exists probability $p \in [0,1]$ such that $p \cdot L + (1-p) \cdot N \sim M$.} and Independence\footnote{For all $L,M,N \in \mathcal{L}$ and $p \in [0,1]$, we have $L \succeq M$ if and only if $p \cdot L + (1-p) \cdot N \succeq p \cdot M + (1-p) \cdot N$.}.
\begin{thm}[Expected Utility Theorem]
\label{exp util thm}
Let preference relation $\succeq$ satisfy the four axioms mentioned above. Then $\succeq$ can be represented by a NM expected utility function $U$. Moreover, the representing $U$ is unique up to a positive affine transformation. That is, if $U$ and $U'$ are NM expected utility functions representing $\succeq$, then there exist $\alpha,c \in \R$ such that for all $L \in \mathcal{L}$, we have $U'(L) = \alpha \cdot U(L) + c$.
\end{thm}
\begin{proof}
See Proposition~6.B.2 and 6.B.3 from \citet{MasCol}.
\end{proof}
In contrast to Theorem~\ref{exp util thm}, suppose we start with an arbitrary NM expected utility function $U$. Then $U$ induces a preference relation $\succeq$ on $\mathcal{L}$ by 
\[ \forall L, L' \in \mathcal{L} \quad :  \quad L \succeq L' :\iff U(L) \geq U(L') \, . \] 
By construction, $U$ represents $\succeq$. One can also show that this induced preference relation $\succeq$ satisfies the four axioms. Therefore, by Theorem~\ref{exp util thm}, $U$ uniquely represents the induced $\succeq$ up to a PAT.

\paragraph{Connections to Game Theory} Take a multiplayer game $G = \big( N, \{S^i\}_{i \in [N]}, \{u_i\}_{i \in [N]} \big)$. Then, the utility functions $u_i$ induce each player's preferences according to the following paragraphs:

Consider a game that only allows for pure strategy play. Then, given some player~$i$ and the pure strategy profile $s^{-i}$ of the opponents, the ``sliced'' utility function $u_i(\cdot, s^{-i})$ induces a preference relation $\succeq$ for player~$i$ over her strategy set $S^i$.

Now suppose that we allow for mixed strategy play in the games. In that case, each element in $\Delta(S^i)$ can be viewed as a lottery over the choice set $C := S^i$. Moreover, player~$i$'s utility payoff from a mixed strategy profile $\displaystyle s \in \bigtimes_{i=1}^N \Delta(S^i)$ is
\[u_i(s^i, s^{-i}) = \sum_{k_i=1}^{m_i} s_{k_i}^i \cdot u_i(e_{k_i}, s^{-i}) \, . \]
Therefore, $u_i(\cdot, s^{-i})$ has the form of a NM expected utility function. This induces a preference relation $\succeq_{i, s^{-i}}$ on the space of lotteries $\Delta(S^i)$ with $\succeq_{i, s^{-i}}$ satisfying the four axioms. Hence, $u_i(\cdot, s^{-i})$ represents the induced preference relation $\succeq_{i, s^{-i}}$ uniquely up to a PAT.

\end{document}